\def\BibTeX{{\rm B\kern-.05em{\sc i\kern-.025em b}\kern-.08em
    T\kern-.1667em\lower.7ex\hbox{E}\kern-.125emX}}
\newtheorem{theorem}{Theorem}
\newtheorem{corollary}{Corollary}
\newtheorem{lemma}{Lemma}
\newtheorem{assumption}{Assumption}
\begin{document}
\title{Taming Subnet-Drift in D2D-Enabled Fog Learning: A Hierarchical Gradient Tracking Approach}
\author{Evan Chen\IEEEauthorrefmark{1}, Shiqiang Wang\IEEEauthorrefmark{2}, and Christopher G. Brinton\IEEEauthorrefmark{1} \\
\IEEEauthorrefmark{1}School of Electrical and Computer Engineering, Purdue University, West Lafayette, IN 47907 \\ \IEEEauthorrefmark{2}IBM T. J. Watson Research Center, Yorktown Heights, NY 10598 \\
Email: \{chen4388, cgb\}@purdue.edu, wangshiq@us.ibm.com
\thanks{This work was supported by the National Science Foundation (NSF) under grants CPS-2313109 and CNS-2212565, by DARPA under grant D22AP00168, and by the Office of Naval Research (ONR) under grant N000142212305.}}


\maketitle

\begin{abstract}
Federated learning (FL) encounters scalability challenges when implemented over fog networks. Semi-decentralized FL (SD-FL) proposes a solution that divides model cooperation into two stages: at the lower stage, device-to-device (D2D) communications is employed for local model aggregations within subnetworks (subnets), while the upper stage handles device-server (DS) communications for global model aggregations. However, existing SD-FL schemes are based on gradient diversity assumptions that become performance bottlenecks as data distributions become more heterogeneous. In this work, we develop semi-decentralized gradient tracking (SD-GT), the first SD-FL methodology that removes the need for such assumptions by incorporating tracking terms into device updates for each communication layer. Analytical characterization of SD-GT reveals convergence upper bounds for both non-convex and strongly-convex problems, for a suitable choice of step size. We employ the resulting bounds in the development of a co-optimization algorithm for optimizing subnet sampling rates and D2D rounds according to a performance-efficiency trade-off. Our subsequent numerical evaluations demonstrate that SD-GT obtains substantial improvements in trained model quality and communication cost relative to baselines in SD-FL and gradient tracking on several datasets.
\end{abstract}

\begin{IEEEkeywords}
Fog Learning, Semi-decentralized FL, Device-to-device (D2D) communications, Federated Learning, Gradient Tracking, Communication Efficiency
\end{IEEEkeywords}

\section{Introduction}
\noindent Federated learning (FL) has emerged as a promising technique for distributed machine learning (ML) over networked systems \cite{kairouz2021advances,li2020federated}. FL aims to solve problems of this form:
\begin{align}
    \min_{x\in \mathbb{R}^d}f(x) &= \frac{1}{n}\sum_{i=1}^n f_i(x)\\
    \textrm{where } f_i(x) &= \mathbb{E}_{\xi_i \sim \mathcal{D}_i}f_i(x;\xi_i),
\end{align}
where $n$ is the total number of clients (typically edge devices) in the system, $f_i(x)$ is the local ML loss function computed at client $i$ for model parameters $x \in \mathbb{R}^d$, $\mathcal{D}_i$ is the local data distribution at client $i$, and $\xi_i$ is a random sample from $\mathcal{D}_i$. 

Conventionally, FL employs a two-step iterative algorithm to solve this optimization: (i) \textit{local model update}, where gradient information computed on the local device dataset is used to update the local model, and (ii) \textit{global model aggregation}, where a central server forms a consensus model across all devices. In wireless networks, however, device-server (DS) communications for the global aggregation step can be expensive, especially for large ML models over long DS distances. Much research in FL has been devoted to improving this communication efficiency, with typical approaches including model sparsification/quantization \cite{wang2022federated,amiri2020federated,li2021talk}, device sampling~\cite{wang2021device}, and aggregation frequency minimization \cite{karimireddy2020scaffold,mishchenko2022proxskip}.

Recent research has considered how decentralizing FL's client-server star topology can improve communication overhead, e.g., by introducing more localized communications wherever possible. In the extreme case of severless FL, model aggregations are conducted entirely through short range device-to-device (D2D) communications \cite{koloskova2019decentralized,lian2017can,zehtabi2022decentralized}. More generally, \textit{fog learning} \cite{hosseinalipour2020federated,nguyen2022fedfog,hosseinalipour2022multi} considers distributing FL over fog computing architectures, where a hierarchy of network elements between the edge and cloud enables horizontal (i.e., intra-layer) and vertical (i.e., inter-layer) communications.


\begin{figure}
\centerline{\includegraphics[width=0.5\textwidth]{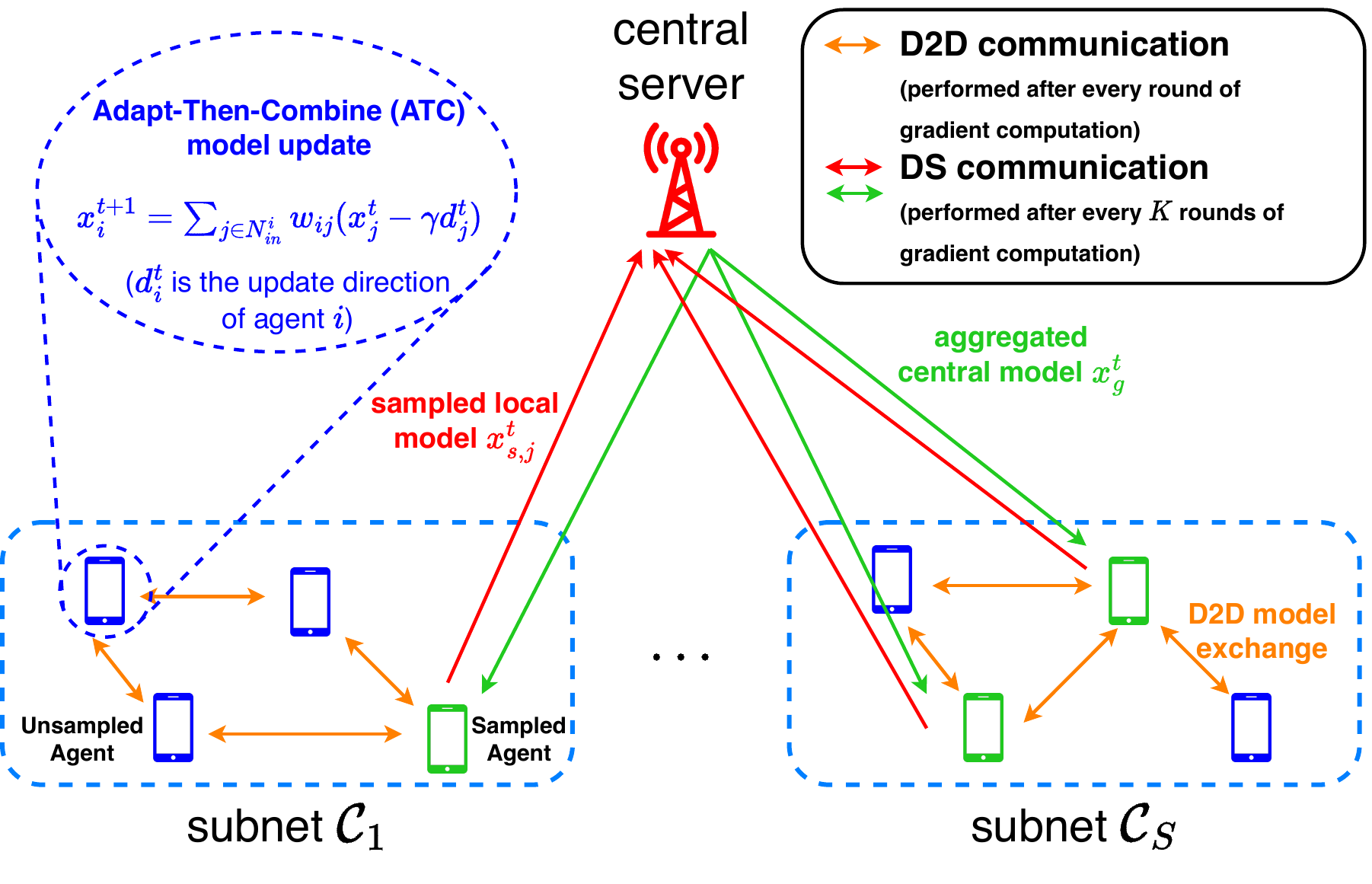}}
\caption{Illustration of semi-decentralized FL. Clients in each subnet communicate via iterative low-cost D2D communications to conduct local aggregations. Once they have converged towards a consensus within the subnet, the central server conducts a global aggregation across sampled devices using DS communication.\vspace{-0.15in}}
\label{fig1}
\end{figure}

\subsection{Semi-Decentralized FL and Subnet-Drift}
\textit{Semi-decentralized FL} (SD-FL) has emerged as an important implementation of fog learning~\cite{lin2021semi,yemini2022semi}. Its overall architecture is depicted in Fig.~\ref{fig1}. Devices are grouped into subnetworks (subnets) of close physical proximity, according to their ability to form D2D connections. For example, consider a set of 5G mobile devices in a cell aiming to learn an ML model: peer relationships can establish D2D-enabled subnets, with the main server located at the base station \cite{suraci2021trusted}.

To enhance communication efficiency, the model aggregation in SD-FL is conducted in two stages: (i) \textit{iterative cooperative consensus formation} of local models within subnets, and (ii) \textit{DS communication among sampled devices} for global aggregation. The idea is that frequent, low-cost in-subnet model aggregations should reduce the burden placed on global, cross-subnet aggregations, as they can occur less frequently engaging fewer clients. However, a fundamental challenge in SD-FL is \textit{managing gradient diversity across subnets:} the more local aggregations we perform, the more the global model drifts away from the global optimum towards a linear combination of local optimums of each subnet. This ``subnet-drift" manifests from the client-drift problem in FL, due to non-i.i.d. local datasets across clients \cite{karimireddy2020scaffold,liu2023decentralized,mishchenko2022proxskip}. 

In this work, we are interested in addressing the subnet-drift challenge for SD-FL. Although some existing works alleviate client drift by letting clients share a portion of their datasets with their neighbors and/or the server \cite{tu2020network,wang2021device,zhao2018federated}, such approaches present privacy issues that FL aims to avoid. As a result, we turn to concepts in \textit{gradient tracking}, which have been successful in mitigating data heterogeneity challenges in fully decentralized learning and do not require data sharing\cite{koloskova2019decentralized,lian2017can}. In this respect, the hierarchical nature of SD-FL presents two key research challenges. First, the differing timescales of D2D and DS communications need careful consideration on how a client should employ gradient information from the server versus from its neighbors. Second, randomness in client participation for DS communication may create bias in aggregated gradient information. We thus pose the following question: 
\begin{center}
\textit{How do we alleviate subnet drift in semi-decentralized FL through gradient tracking while ensuring gradient information is well mixed throughout the system?}
\end{center}
To address this, a key component of our design is to consider two separate gradient tracking terms, one for each communication stage of SD-FL, which treat the incoming information differently.
Our convergence analysis and subsequent experiments demonstrate how this stabilizes the global learning process.

\subsection{Outline and Summary of Contributions}
\begin{itemize}
    \item We propose Semi-Decentralized Gradient Tracking (SD-GT), the first work to integrate gradient tracking into SD-FL which is robust to data heterogeneity. SD-GT can tolerate a large number of D2D communications between two global aggregation rounds without risking convergence to a sub-optimal solution (Sec. \ref{sec:III}).
    
    \item We conduct a Lyapunov-based convergence analysis, showing convergence upper bounds for both non-convex and strongly convex functions. We employ these results in a co-optimization of convergence speed and communication efficiency based on the D2D communication rounds and subnet sampling rates (Sec. \ref{sec:IV}).

    \item Our experiments verify that SD-GT obtains substantial improvements in trained model quality and convergence speed relative to baselines in SD-FL and gradient tracking. Moreover, we verify the behavior of our co-optimization optimization in adapting to the relative cost of D2D vs. DS communications (Sec. \ref{sec:V}).
    
\end{itemize}

\section{Related Works}
\label{sec:II}


\textbf{Hierarchical FL.} Hierarchical FL has received considerable attention for scaling up model training across large numbers of edge devices. Most of this work has considered a multi-stage tree extension of FL \cite{wang2021resource,liu2020client,hosseinalipour2022multi,wang2022infedge}, i.e., with each stage forming its own star topology for local aggregations. A commonly considered use case has been the three-tier hierarchy involving device, base station, and cloud encountered in cellular networks. Optimized hierarchical aggregations have demonstrated significant improvements in convergence speed and/or communication efficiency. In a separate domain, these concepts have been employed for model personalization in cross-silo FL~\cite{zhou2023hierarchical}.

Our work focuses specifically on semi-decentralized FL, where edge subnets conduct local aggregations via D2D-enabled cooperative consensus formation   \cite{lin2021semi,yemini2022semi,parasnis2023connectivity}. SD-FL is intended for settings where DS communications are costly, e.g., due to long distances. The authors of \cite{lin2021semi} were the first to formally study the convergence behavior of SD-FL, wherein they proposed a control algorithm to maintain convergence based on approximations of data-related parameters. \cite{parasnis2023connectivity} developed SD-FL based on more general models of subnet topologies that may be time-varying and directed. 
A main issue with all current SD-FL papers is that they assume that either the gradient, gradient diversity, or data-heterogeneity are bounded, while \cite{lin2021semi} and  \cite{parasnis2023connectivity} even require knowledge on the connectivity of each subnet. In our work, we are able to remove the requirement of any knowledge on data distribution or subnet topology and still guarantee convergence.

\textbf{Gradient tracking for communication efficiency.} Gradient tracking (GT) methods \cite{di2016next,nedic2017achieving,tian2018asy,koloskova2021improved,sun2022distributed} were proposed to mitigate data heterogeneity in decentralized optimization algorithms. The main idea is to track the gradient information from neighbors every time a communication is performed. GT has become particularly popular in settings where communication costs are high, as it enables algorithms to reach the optimum point using a large number of local updates and minimal communications\cite{karimireddy2020scaffold,mishchenko2022proxskip,liu2023decentralized,alghunaim2023local,ge2023gradient,berahas2023balancing,zhang2021low}. Assumptions on data heterogeneity can be lifted under proper initialization of gradient tracking variables.

Our work instead considers GT under a semi-decentralized network setting. In this respect, \cite{huang2022tackling} discussed GT under a hierarchical network structure, where they assumed: (i) random edge activation within subgraphs and (ii) all subgraphs being connected by a higher layer graph that communicates after every gradient update. However, the hierarchical structure they consider is different from SD-FL, where in our setting D2D communication usually is cheaper than DS communication and thus occurs at a much higher frequency. In this paper, we develop a GT methodology that is aware of the diversity in information mixing speeds between D2D and DS, and track this difference by maintaining two separate GT terms.

\section{Proposed Method}
\label{sec:III}
\begin{figure}
\centerline{\includegraphics[width=0.5\textwidth]{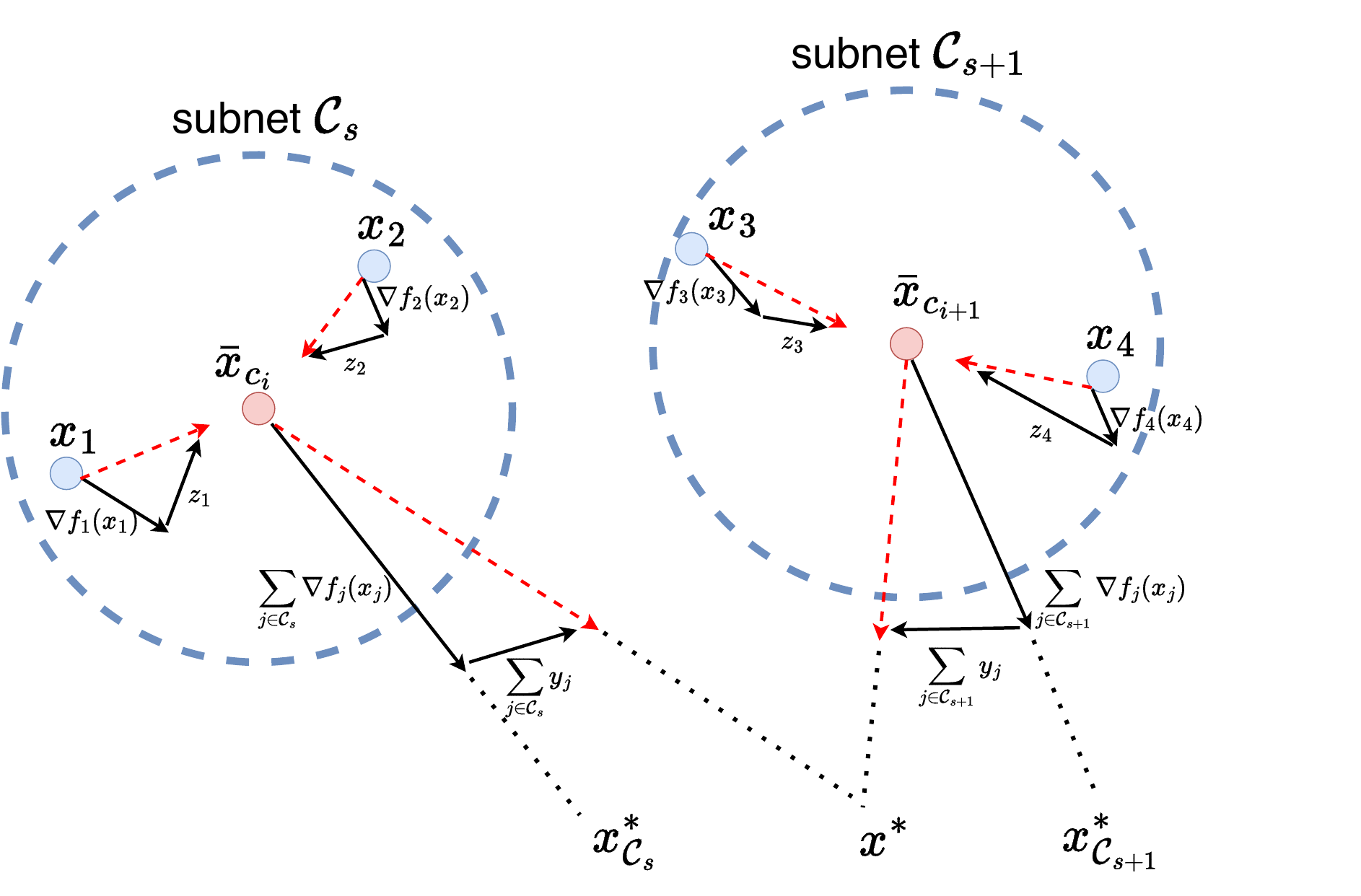}}
\caption{An illustration of how SD-GT deals with subnet-drifting. With the introduction of in-subnet GT term $z_i^t$, all clients within each subnet are able to converge towards a consensual location of the subnet. And the inter-subnet GT term $y_i^t$ corrects the update direction of the whole subnet so that it no longer converges towards the optimal solution $x_{\mathcal{C}_s}^* $of the subnet $\mathcal{C}_s$ but the optimal solution $x^*$ of the whole network.\vspace{-0.15in}}
\label{fig2}
\end{figure}
In this section, we first introduce the overall network structure of SD-FL (Sec. \ref{sec:IIIA}). Then we develop our SD-GT algorithm, explaining the usage of each tracking variable and how they solve the subnet-drift problem (Sec. \ref{sec:IIIB}). Finally, we show that our method encapsulates two existing methods under specific network topologies (Sec. \ref{sec:IIIC}). 

\subsection{Network Model and Timescales}
\label{sec:IIIA}
We consider a network containing a central server connected upstream from $n$ clients (edge devices), indexed $i = 1,...,n$. As shown in Figure \ref{fig1}, the devices are partitioned into $S$ disjoint subnets 
$\mathcal{C}_1, \ldots, \mathcal{C}_S$. Subnet $s$ contains $m_{s} = |\mathcal{C}_s|$ clients, where $\sum_{s=1}^S m_{s} = n$. Similar to existing works in SD-FL~\cite{lin2021semi,yemini2022semi}, we do not presume any particular mechanism by which clients have been grouped into subnets, except that clients within the same subnet are capable of engaging in D2D communications according to a wireless protocol, e.g., devices within a 5G cell.

For every client $i \in \mathcal{C}_s$, we let $\mathcal{N}^{\mathrm{in}}_i \subseteq \mathcal{C}_s$
be the set of input neighbors for D2D transmissions. Considering all clients $i, j \in \mathcal{C}_s$, we define $W_{s} = [w_{ij}] \in \mathbb{R}^{m_{s}\times m_{s}}$ to be the D2D communication matrix for subnet $s$, where $0 < w_{ij} \leq 1$ if $j \in \mathcal{N}^{\mathrm{in}}_i$, and $w_{ij} = 0$ otherwise. As we will see in Sec.~\ref{sec:IIIB}, $w_{ij}$ is the weight that client $i$ will apply to information received from client $j$. We then can define the network-wide D2D matrix $W = \mathrm{diag}(W_{1}, \ldots, W_{S}) \in \mathbb{R}^{n\times n}$, which is block-diagonal given that the subnets do not directly communicate. 
In Sec.~\ref{sec:IV-A}, we will discuss further assumptions on the subnet matrices $W_s$ for our convergence analysis.

The SD-GT training process consists of two timescales. The outer timescale, $t = 1,2,\ldots,T$, indexes global aggregations carried out through DS communications. The inner timescale, $k = 1, \ldots, K$, indexes local training and aggregation rounds carried out via D2D communications. 
We assume a constant $K$ local rounds occur between consecutive global aggregations.
\subsection{Learning Model}
\label{sec:IIIB}
As shown in Algorithm \ref{alg:1}, each client maintains two gradient tracking terms, $y_i^t$ and $z_i^t$, which track (i) the gradient information between different subnets and (ii) the gradient information inside each subnet, respectively. These two variables act as corrections to the local gradients so that the update direction can guarantee convergence towards global optimum as in Figure \ref{fig2}.


\textbf{In-subnet Updates.} We denote $x_i^{t,k}$ as the ML parameter vector stored at client $i$ during the $t$\textsuperscript{th} global aggregation round and $k$\textsuperscript{th} D2D communication round. The local model updates are performed by first updating its local model, and then making a linear combination with models received from its neighbors, also known the Adapt-Then-Combine (ATC) scheme. ATC is known to have a better performance compared to other mixing schemes\cite{tu2012diffusion}. The update direction not only includes the gradient direction computed from the local function $\nabla f_i(\cdot)$ but also the gradient tracking terms $z_i^t$ and $y_i^t$:
\begin{align}
    \textstyle x_{i}^{t,k+\frac{1}{2}} &\textstyle= x_{i}^{t,k} - \gamma \left( \nabla f_i(x_{i}^{t,k}, \xi_i^{t,k}) + y_{i}^{t} + z_i^{t}\right), && \forall i,
    \label{eq3}\\
    \textstyle  x_{i}^{t,k+1} &\textstyle= \sum_{j \in \mathcal{N}^{\mathrm{in}}_i\cup \{i\}} w_{ij}x_{j}^{t,k+\frac{1}{2}}, && \forall i,
    \label{eq4}
\end{align}
where $\nabla f_i(x_{i}^{t,k}, \xi_i^{t,k})$ denotes the stochastic gradient of $\nabla f_i(x_{i}^{t,k})$.
After the $K$ rounds of D2D communications, each client updates its own in-subnet gradient tracking term $z_i^t$ using the measured difference of the variable $x_i^{t,k}$ every time a D2D communication within the subnet is performed:
\begin{align}
    \textstyle \Tilde{z}_i^{t,k}  &\textstyle= x_{i}^{t,k+\frac{1}{2}} - x_{i}^{t,k} + \gamma y_i^t, && \forall i,
    \label{eq5}\\
    \textstyle z_i^{t+1} &\textstyle= z_i^t + \frac{1}{K\gamma}\sum_{k=1}^K(\Tilde{z}_i^{t,k} - \sum_{j \in \mathcal{N}^{\mathrm{in}}_i \cup \{i\}} w_{ij}\Tilde{z}_j^{t,k}), && \forall i.\label{eq6}
\end{align}

\textbf{Global Aggregation.} The central server is able to choose the total number of clients $h_{s} \in \{1,...,m_s\}$ to sample from each subnet $s$, e.g., based the DS communication budget. These will be variables in our optimization considered in Sec.~\ref{sec:IV-C}. Those clients that are not sampled by the central server for round $t$ will not update their parameters, and maintain $x_i^t$ and $y_i^t$ into the next communication round. 

We denote $x_\mathrm{g}^t$ as the global model that is stored at the server. At each global aggregation $t$, the server accumulates the gradient information collected from each subnet, and updates the inter-subnet gradient tracking terms that are stored on the server, which we denote $\psi_{s}^t$ for subnet $s$.

Now, for each subnet $s$, let the sampled clients for round $t$ be indexed by $j = 1,\ldots, h_{s}$. Before uploading the information to the server, each client conducts a cancellation of the inter-subnet gradient tracking information so that the global model receives unbiased gradient information. Formally, we have the following sequence of updates, based on intermediate quantities $\textstyle\Tilde{x}_{s,j}^{t+1}$ at the client-side and $\textstyle\Tilde{x}_\mathrm{g} ^{t+1}$ at server-side:
\begin{align}
        &\textstyle\Tilde{x}_{s,j}^{t+1} = x_{{s,j}}^{t, K+1} - x_{{s,j}}^{t, 1}+ K\gamma y_{{s,j}}^t,&& \forall j,\forall s, \label{eq7}\\
        &\textstyle\Tilde{x}_\mathrm{g} ^{t+1} = \frac{1}{S\cdot h_{s}}\sum_{s = 1}^{S}\sum_{j=1}^{h_{s}}  \Tilde{x}_{{s,j}}^{t+1}, \label{eq8}\\
        &\textstyle x_\mathrm{g} ^{t+1} = x_\mathrm{g} ^t + \Tilde{x}_\mathrm{g}^{t+1},\\
        &\textstyle \psi_{{s}}^{t+1} = \frac{1}{K\gamma}(\frac{1}{h_{s}}\sum_{j=1}^{h_{s}}\Tilde{x}_{{s,j}}^{t+1} - \Tilde{x}_\mathrm{g} ^{t+1}), && \forall s.\label{eq10}
\end{align}
\eqref{eq7} is a client-side computation, while \eqref{eq8}-\eqref{eq10} occur at the server. Finally, the server broadcasts the updated global model $x_\mathrm{g}^{t+1}$ and inter-subnet gradient tracking terms $y_{s}^{t+1}$ to the sampled clients to complete the synchronization:
\begin{align*}
        \textstyle x_{{s,j}}^{t,1} &\textstyle= x_\mathrm{g} ^{t+1}, \quad \forall j,\\
        \textstyle y_{{s,j}}^{t+1} &\textstyle= \psi_{s}^{t+1}, \quad \forall j.
\end{align*}

Maintaining two gradient tracking terms is an essential feature of SD-GT for stabilizing convergence. In particular, if we only used the between-subnet measure $y_i^t$ to track gradient information, then the gradient information within each subnet would deviate from the average of the subnet, preventing the system from converging towards the global minimum. On the other hand, if we only used the within-subnet measure $z_i^t$ to track gradient information, then each subnet $s$ will tend to converge towards its local minimum $x_{s}^*$ instead.

\begin{algorithm}[t]
{\footnotesize
\caption{\footnotesize SD-GT: Semi-Decentralized Gradient Tracking}
\label{alg:1}
\KwConstants{step size $\gamma > 0$, initial model parameter $x^0$}
\KwOutput{$x_\mathrm{g} ^t$}
\textbf{Local parameter initialization}\\
$x_i^{1,1} \gets x^0, \quad \forall i$\\
$y_1^1 = y_2^1 = \cdots = y_n^1 \gets 0, \quad \forall i$\\
$z_1^1 = z_2^1 = \cdots = z_n^1 \gets 0, \quad \forall i$\\
\textbf{Server parameter initialization}\\
$x_\mathrm{g} ^1 \gets x^0$\\
$\psi_{s}^1 \gets 0, \quad \forall s$\\
\For{$t\leftarrow 1,\ldots,T$}{
\tcc{Step 1: In-Subnet Model Update}
\For{$k\leftarrow 1,\ldots,K$}{
\ClientFor{$i \gets 1,\ldots,n$}{
Perform in-subnet update on the local model $x_i^{t,k}$ based on \eqref{eq3} and \eqref{eq4}.

}
}
\tcc{Step 2: Local Tracking Term Update}
\ClientFor{$i \gets 1,\cdots,n$}{
    Update the in-subnet gradient tracking variable $z_i^t$ based on \eqref{eq5} and \eqref{eq6}.
}
\tcc{Step 3: Global Aggregation}
sample $h_{s}$ random clients $x_{s,j} \sim \mathcal{C}_s$ from every subnet\\
\SubnetFor{$s\gets 1,\ldots,S$}{
\AggrFor{$j \leftarrow 1,\ldots,h_{s}$}{
Use \eqref{eq7} to \eqref{eq10} to update variables $x_\mathrm{g} ^t$ and $\psi_{s}^t$, $\forall s$ on the server, then broadcast them to all sampled clients.
}
}
\textbf{All clients that are not sampled:}\\
    $x_i^{t+1,1} \gets x_i^{t,K+1}$\\
    $y_i^{t+1} \gets y_i^t$
}
}\end{algorithm}
\subsection{Connection with Existing Methods}
\label{sec:IIIC}
Some existing methods can be shown to be special cases of our algorithm under certain network structures.

\textbf{Case 1: Every client is its own subnet ($S = n$).}
Under this setting, the server always samples the full subnet since each subnet contains only one client. Then we can see that with the initialization $z_1^1 = z_2^1 = \ldots = z_n^1 = 0$, the in-subnet gradient tracking terms are always zero:
\begin{align*}
    \textstyle z_i^{t+1} = z_i^t + \frac{1}{K\gamma}\sum_{k=1}^K(\Tilde{z}_i^{t,k} - \Tilde{z}_i^{t,k}) = z_i^t = 0, \quad \forall i.
\end{align*}
The global gradient tracking term $y_{s,j}^t$ can be formulated as:
\begin{align*}
    \textstyle y_{s,j}^{t+1} = \psi_{s}^{t+1} & \textstyle= \frac{1}{K\gamma}(\sum_{j=1}^{h_{s}} \frac{1}{h_{s}}\Tilde{x}_{{s,j}}^{t+1} - \Tilde{x}_\mathrm{g} ^{t+1})\notag\\
    &\textstyle= y_{s,j}^{t} + \frac{1}{K\gamma}(x_{s}^{t,K+1} - x_\mathrm{g} ^{t+1}), \quad \forall s.
\end{align*}
With the in-subnet gradient tracking term being zero and the global gradient tracking term in the form above, our algorithm aligns with Proxskip~\cite{mishchenko2022proxskip} under this setting.

\textbf{Case 2: Single subnet with one D2D communication round ($S = 1, K = 1$).}
Under this setting, the global gradient tracking term is always zero since $\sum_{j=1}^{h_s} \frac{1}{h_s}\Tilde{x}_{s,j}^{t+1} = \Tilde{x}_\mathrm{g} ^{t+1}$:
\begin{equation*}
    \textstyle \psi_{s}^{t+1} = \psi_{s}^{t} + \frac{1}{\gamma}(\sum_{j=1}^{h_{s}} \frac{1}{h_c}\Tilde{x}_{{s,j}}^{t+1} - \Tilde{x}_\mathrm{g} ^{t+1}) = y_{s}^{t} = 0, \quad \forall s.
\end{equation*}
Since the D2D rounds are set to one, the update of $z_i^t$ for every client $i$ can be formulated as:
\begin{align*}
    \textstyle z_i^{t+1} \textstyle=& z_i^t + \frac{1}{K\gamma}\sum_{k=1}^K(\Tilde{z}_i^{t,k} - \sum_{j \in \mathcal{N}^{\mathrm{in}}_i \cup \{i\}} w_{ij}\Tilde{z}_j^{t,k})\\
    =&\textstyle \sum_{j \in \mathcal{N}^{\mathrm{in}}_i \cup \{i\}} w_{ij}(z_j^t+ \nabla f_j(x_j^t)) - \nabla f_i(x_i^t),\quad\forall i .
\end{align*}
If we define $\hat{z}_i^t = z_i^t + \nabla f_i(x_i^t)$, we are left with:
\begin{align*}
    \textstyle x_i^{t+1} &=\textstyle \sum_{j \in \mathcal{N}^{\mathrm{in}}_i \cup \{i\}} w_{ij}(x_j^{t} - \gamma \hat{z}_i^t), &&\forall i,\\
    \textstyle\hat{z}_i^{t+1} &= \sum_{j \in \mathcal{N}^{\mathrm{in}}_i \cup \{i\}} w_{ij}\hat{z}_j^t + \nabla f_i(x_i^{t+1}) -  \nabla f_i(x_i^t),&&\forall i,
\end{align*}
which aligns with the gradient tracking algorithm \cite{sun2016distributed,di2016next}.

\section{Analysis and Optimization}
\label{sec:IV}
We first show the assumptions used for the proof (Sec. \ref{sec:IV-A}), then provide the convergence analysis under non- and strong-convexity (Sec. \ref{sec:IV-B}). Finally, we derive a co-optimization algorithm that considers the trade-off between communication cost and performance (Sec. \ref{sec:IV-C}). We defer the proofs of supporting lemmas to the appendices.

\subsection{Convergence Analysis Assumptions}
\label{sec:IV-A}
The first three assumptions established are general assumptions \cite{liu2023decentralized,mishchenko2022proxskip,wang2022federated} that are applied to both theorems in this paper, while the last assumption is a stricter condition \cite{tian2018asy} that guarantees a better convergence rate in Theorem \ref{thm2}. 

\begin{assumption}\textbf{(L-smooth)} Each local objective function $f_i$:
$\mathbb{R}^d \rightarrow \mathbb{R}$ is L-smooth:
\begin{equation*}
    \|\nabla f_i(x) - \nabla f_i(y)\|_2 \leq L\|y - x\|_2, \quad \forall x,y\in \mathbb{R}^d.
\end{equation*}

\label{asmp1}
\end{assumption}
\begin{assumption}\textbf{(Mixing Rate)} Each subnet $\mathcal{C}_1, \ldots, \mathcal{C}_S$ has a strongly connected graph, with doubly stochastic weight matrix $W_{s} \in \mathbb{R}^{m_{s} \times m_{s}}$. Define $J_{s} = \frac{\mathbf{11^T}}{m_{s}}$, then there exists a constant $\rho_{s}\in (0,1]$ s.t.
\begin{equation*}
    \| X (W_{s} - J_{s})\|_F^2 \leq (1 - \rho_{s})\| X(I - J_{s})\|_F^2, \quad \forall X \in \mathbb{R}^{d \times m_{s}}.
\end{equation*}
\label{asmp2}
\end{assumption}
\begin{assumption}\textbf{(Bounded Variance)}
Variances of each client's stochastic gradients are uniformly bounded.
\begin{align*}
    \mathbb{E}_{\xi \sim \mathcal{D}_i}\|\nabla f_i(x;\xi) - \nabla f_i(x)\|_2^2 \leq \sigma^2, \forall i \in [1, n], \forall x \in \mathbb{R}^d.
\end{align*}

\label{assmp3}
\end{assumption}
\begin{assumption}\textbf{($\mu$-strongly-convex)}
    Every local objective function $f_i: \mathbb{R}^d \rightarrow \mathbb{R}$ is $\mu$-strongly convex with $0 < \mu \leq L$.
\begin{equation*}
    f(y) \leq f(x) + \nabla f(x)^T(y-x) + \frac{\mu}{2}\|y - x\|^2, \quad \forall x,y\in \mathbb{R}^d.
\end{equation*}
\label{assmp4}
\end{assumption}

\subsection{Convergence Analysis Results}
\label{sec:IV-B}
Here we provide the theoretical bounds for our algorithm in both non-convex and strongly convex settings without assumptions on data-heterogeneity. We define $\overline{x}_g^t = \mathbb{E}_{h_S}[x_g^t]$ to be the server model's expectation over the sample sets $h_S$.

\begin{theorem}\textbf{(Non-convex)}
Under Assumptions \ref{asmp1}, \ref{asmp2}, and \ref{assmp3}. Let $\beta_{s} = \frac{m_{s} - h_{s}}{m_{s}}$ be the ratio of unsampled clients from each subnet, and define the \textbf{sample-wise mixing rate} term $\phi_{s} = (1 - \beta_{s}^2)$
. Define $p = \min(\phi_{1} , \ldots, \phi_{S})\in (0,1]$, and $q = \min(\rho_{1}, \ldots, \rho_{S})\in (0,1]$. With a choice of constant step size $\gamma < \mathcal{O}(\frac{p^2q}{KL})$, we have the following convergence rate:
{\begin{align}
    \textstyle \frac{1}{T}\sum_{t=1}^T\mathbb{E}\|\nabla f(\overline{x}_\mathrm{g} ^t)\|^2 
    \leq &\textstyle\mathcal{O}\bigg(\frac{\mathcal{H}^{1}}{TK\gamma} + \frac{LK\gamma^2\sigma^2}{2}     +\frac{L^2K^2\gamma^3\sigma^2}{p^4q^2}\bigg).
\end{align}}
\label{thm1}
\end{theorem}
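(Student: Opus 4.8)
The plan is to run a Lyapunov (potential-function) argument anchored on the global model sequence $x_\mathrm{g}^t$, tracking the objective descent simultaneously with three auxiliary error quantities: the in-subnet consensus error $\sum_{i\in\mathcal{C}_s}\|x_i^{t,k}-\bar{x}_s^{t,k}\|^2$, the in-subnet tracking error tied to $z_i^t$, and the inter-subnet tracking error tied to $\psi_s^t$ (equivalently $y_i^t$). The conceptual core is that gradient tracking renders the aggregated update direction unbiased for the true average gradient, which is precisely what permits dropping any data-heterogeneity assumption. I would first establish the two conservation identities this produces: summing \eqref{eq6} over $i\in\mathcal{C}_s$ and using that $W_s$ is doubly stochastic (Assumption \ref{asmp2}) gives $\sum_{i\in\mathcal{C}_s} z_i^{t}=0$ for all $t$ under the zero initialization, and the construction in \eqref{eq8}–\eqref{eq10} forces the $\psi_s^t$ terms to sum to zero across subnets. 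These guarantee that neither tracking term biases the subnet averages or the global model, so both move along the averaged stochastic gradient direction.

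Next I would derive the descent inequality. Applying $L$-smoothness (Assumption \ref{asmp1}) to $f$ along $x_\mathrm{g}^{t+1}=x_\mathrm{g}^t+\tilde{x}_\mathrm{g}^{t+1}$ and substituting \eqref{eq7}–\eqref{eq8}, the inner-product term reduces (via the conservation identities) to $-K\gamma\langle\nabla f(x_\mathrm{g}^t),\,\tfrac{1}{n}\sum_{i,k}\nabla f_i(x_i^{t,k})\rangle$ plus residuals. Adding and subtracting $\nabla f_i(x_\mathrm{g}^t)$ converts the residuals into terms controlled by the drift $\sum_{i,k}\|x_i^{t,k}-x_\mathrm{g}^t\|^2$ and, upon taking expectations, by the variance $\sigma^2$ (Assumption \ref{assmp3}). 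This yields a per-round bound of the form $\mathbb{E}[f(x_\mathrm{g}^{t+1})]\le\mathbb{E}[f(x_\mathrm{g}^t)]-c\,K\gamma\,\mathbb{E}\|\nabla f(x_\mathrm{g}^t)\|^2+(\text{drift})+(\text{noise})$.

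I would then close the coupled error recursions and assemble the potential. The consensus error contracts at rate $(1-\rho_s)$ via the mixing inequality in Assumption \ref{asmp2}, driven by gradient and tracking-term disagreements, which is where $q=\min_s\rho_s$ enters; the $z_i^t$ recursion is handled analogously through the mixing operator. The $\psi_s^t$ recursion is the delicate one: because only $h_s$ of $m_s$ clients are sampled, the server's estimate of each subnet's aggregate displacement carries a sampling variance scaling with $\beta_s=(m_s-h_s)/m_s$, with effective contraction governed by $\phi_s=1-\beta_s^2$, hence by $p=\min_s\phi_s$. Forming $\mathcal{L}^t=(f(x_\mathrm{g}^t)-f^*)+c_1(\text{consensus})+c_2(z\text{-error})+c_3(\psi\text{-error})$ with constants chosen so the cross-terms linking descent, drift, and the two tracking recursions cancel, I would collapse everything to $\mathbb{E}[\mathcal{L}^{t+1}]\le\mathbb{E}[\mathcal{L}^t]-c\,K\gamma\,\mathbb{E}\|\nabla f(x_\mathrm{g}^t)\|^2+(\text{noise})$. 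Summing over $t=1,\dots,T$, dividing by $c K\gamma T$, and identifying $\mathcal{H}^1$ with the initial potential $\mathcal{L}^1$ produces the $\mathcal{H}^1/(TK\gamma)$ term and the $LK\gamma^2\sigma^2$ term, while the noise propagated through the tracking recursions yields the $L^2K^2\gamma^3\sigma^2/(p^4q^2)$ term.

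The main obstacle is closing this coupled linear system of the consensus error and the two tracking errors, since the inter-subnet sampling noise feeds back into the descent through $x_\mathrm{g}^t$. Making all three recursions simultaneously contractive is exactly what forces the step-size restriction $\gamma<\mathcal{O}(p^2q/(KL))$, and carrying the precise powers of $p$ and $q$ through the cancellation of cross-terms — which is what produces the $p^4q^2$ denominator — is the most delicate and error-prone part of the computation.
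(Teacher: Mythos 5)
Your overall architecture is the paper's: a descent inequality from $L$-smoothness, coupled contraction recursions for the drift and the two gradient-tracking errors, a Lyapunov function with coefficients tuned so the cross-terms cancel, the step-size restriction $\gamma\lesssim p^2q/(KL)$ to make the system contractive, and a final telescoping sum. The conservation identities you state ($\sum_{i\in\mathcal{C}_s}z_i^t=0$ and the $\psi_s^t$ summing to zero) are correct and are indeed what make the $Y^t,Z^t$ error measures well-posed.

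There is, however, one missing component that would prevent your coupled system from closing under partial client participation. Clients that are \emph{not} sampled at round $t$ do not receive the broadcast $x_\mathrm{g}^{t+1}$; they carry $x_i^{t,K+1}$ into the next round. Consequently the per-round drift $\Delta^t=\frac{1}{n}\sum_{i,k}\mathbb{E}\|x_i^{t,k}-x_\mathrm{g}^t\|^2$ is driven not only by gradients and tracking terms but by the \emph{stale-model gap} $\Gamma^t=\frac{1}{n}\sum_i\mathbb{E}\|x_i^{t-1,K+1}-x_\mathrm{g}^t\|^2$ (see the $3\rho_m^2K\Gamma^t$ term in Lemma~\ref{dev_lem}), and $\Gamma^t$ needs its own cross-round contraction at rate $1-\frac{p}{2}$ (Lemma~\ref{lem7}) and its own slot in the Lyapunov function (the $c_1\frac{K\gamma}{p}\Gamma^t$ term in $\mathcal{H}^t$). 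Your potential has only the consensus error and the two tracking errors; you model the sampling effect solely as variance inside the $\psi_s^t$ recursion, which does not capture the persistent model-level staleness. Relatedly, your "consensus error" $\sum_i\|x_i^{t,k}-\bar{x}_s^{t,k}\|^2$ is a within-round quantity indexed by $k$ and cannot contract across outer rounds via the mixing rate as you describe; in the paper $q$ enters only through the $Z^t$ recursion (Lemma~\ref{lem6}), while the drift is bounded within a single round by Lemma~\ref{dev_lem} and added to the descent inequality with a free multiplier rather than placed in the potential. Without the $\Gamma^t$ recursion the unsampled clients' accumulated displacement is unaccounted for, and the claimed one-step decrease of $\mathcal{L}^t$ cannot be established for $p<1$.
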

\begin{proof}
If we define a Lyapunov function $\mathcal{H}^{t} = \mathbb{E}f(\overline{x}_\mathrm{g} ^t) - \mathbb{E}f(x^*) + c_0 K^3\gamma^3 \bigg(\frac{1}{p}Y^t + \frac{1}{q}Z^t\bigg) + c_1 \frac{K\gamma}{p}\Gamma^t$ and also combine Lemma \ref{dev_lem} with a constant $c_2$ (see definition of $\Gamma^t, Y^t, Z^t$, and $\Delta^t$ in the appendix), we have:
{\begin{align*}
0 \leq& \textstyle c_2 \gamma L^2\bigg(-\Delta^t + 3\rho_m^2K\frac{1}{n}\sum_{i=1}^n\mathbb{E}\|x_i^{t-1,K+1} - \overline{x}_\mathrm{g} ^t\|^2\notag\\
&\textstyle+ 6K^3\gamma^2 (4Y^t + 4Z^t+\mathbb{E}\|\nabla f(\overline{x}_\mathrm{g} ^t)\|^2) + 3K^2\gamma^2\sigma^2\bigg).
\end{align*}}
Choose the constants as the following: $c_2> 2$, 
$c_1 = 6L^2 c_2$, 
$c_0 = \frac{1152}{p^2}L^2c_2$, 
$\gamma < \frac{p^2q}{945KL}$.
Then by combining Lemmas \ref{noncvx_lemma}, \ref{dev_lem}, \ref{lem5}, \ref{lem6}, \ref{lem7}, and with values $D_1, D_2, D_3, D_4, D_5 \geq 0$ and $D > 0$, we can obtain the following recursive form:
{\begin{align*}
\textstyle\mathcal{H}^t - \mathcal{H}^{t-1} \leq&\textstyle -DK\gamma\mathbb{E}\|\nabla f(\overline{x}_\mathrm{g} ^{t-1})\|^2\notag\\
        &\textstyle-D_1Y^{t-1} - D_2Z^{t-1}- D_3\Gamma^{t-1} - D_4 \Delta^{t-1}\notag\\
        &\textstyle+\frac{D_5L^2}{p^4q^2 \cdot K}(K^3\gamma^3)\sigma^2 + \frac{L}{2K}(K^2\gamma^2)\sigma^2\\
        \leq&\textstyle -DK\gamma \mathbb{E}\|\nabla f(\overline{x}_\mathrm{g} ^{t-1})\|^2\notag\\
         &\textstyle+\frac{D_5L^2}{p^4q^2 \cdot K}(K^3\gamma^3)\sigma^2 + \frac{L}{2K}(K^2\gamma^2)\sigma^2.
\end{align*}}
 We can then unpack the recursion of the Lyapunov function and get the final rate.
\end{proof}
In Theorem \ref{thm1}, $p$ implies the sampling rate of the server and $q$ implies the information mixing ability of D2D communications. Large $p$ indicates the server samples a large amount of clients and a large $q$ indicates the connectivity of every subnet is high.
By carefully choosing the step size, we can achieve the following result:
\begin{corollary} Under the same conditions as in Theorem \ref{thm1}, there exists a constant step size such that:
\begin{align}
    &\textstyle\frac{1}{T}\sum_{t=1}^T\mathbb{E}\|\nabla f(\overline{x}_\mathrm{g} ^t)\|^2 
    = \mathcal{O}\bigg(\sqrt{\frac{\mathcal{H}^1\sigma^2L}{TK}} + (\frac{\mathcal{H}^1L\sigma}{KTp^2q})^{\frac{2}{3}} + \frac{\mathcal{H}^1L}{Tp^4q^2}\bigg).
\label{eq28}
\end{align}
\vspace{-0.15in}

\label{cor1}
\end{corollary}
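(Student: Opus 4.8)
The plan is to treat Corollary~\ref{cor1} as a standard step-size optimization applied to the guarantee of Theorem~\ref{thm1}. First I would make the dependence on $\gamma$ explicit by writing the averaged gradient norm in the canonical form
\begin{equation*}
\Psi(\gamma) := \frac{1}{T}\sum_{t=1}^T\mathbb{E}\|\nabla f(x_\mathrm{g}^t)\|^2 \;\leq\; \frac{A}{\gamma} + B\gamma + C\gamma^2,
\end{equation*}
valid for every feasible $\gamma \leq \gamma_{\max}$, where $\gamma_{\max} = \Theta(p^2q/(KL))$ is the admissible range carried over from Theorem~\ref{thm1}. The coefficient of the decreasing term is $A = \Theta(\mathcal{H}^1/(KT))$, while $B,C\geq 0$ multiply the two stochastic-error terms. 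This $(1/\gamma,\gamma,\gamma^2)$ decomposition is read directly off the telescoped Lyapunov recursion in the proof of Theorem~\ref{thm1}: once the additive per-round noise terms are summed over $t$ and the whole inequality is divided by the $DK\gamma T$ prefactor of the gradient sum, each increasing contribution drops one degree in $\gamma$, leaving a linear term with $B = \Theta(L\sigma^2)$ and a quadratic term with $C$ carrying the $L^2\sigma^2/(p^4q^2)$ dependence.

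Given this form, the second step is to invoke a standard tuning argument of the type ubiquitous in the decentralized SGD literature: choose
\begin{equation*}
\gamma = \min\left\{ \Big(\tfrac{A}{B}\Big)^{1/2},\; \Big(\tfrac{A}{C}\Big)^{1/3},\; \gamma_{\max}\right\}.
\end{equation*}
Each candidate equalizes the decreasing term $A/\gamma$ with one of the two increasing terms, or else saturates the feasibility constraint. Bounding $\Psi$ by the sum of the three resulting values yields
\begin{equation*}
\Psi \;\leq\; \mathcal{O}\!\left(\sqrt{AB} + A^{2/3}C^{1/3} + \frac{A}{\gamma_{\max}}\right),
\end{equation*}
and I would then match each piece to \eqref{eq28}. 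The balance $A/\gamma=B\gamma$ (minimizer $\gamma=\sqrt{A/B}$) contributes $\sqrt{AB}=\Theta(\sqrt{\mathcal{H}^1 L\sigma^2/(KT)})$, the first corollary term; the balance $A/\gamma=C\gamma^2$ (minimizer $\gamma=(A/C)^{1/3}$) contributes $A^{2/3}C^{1/3}$, matching the $(\mathcal{H}^1 L\sigma/(KTp^2q))^{2/3}$ term; and the regime where the constraint binds contributes the $\Theta(A/\gamma_{\max})$ term, reproducing the $\mathcal{H}^1 L/(Tp^4q^2)$ contribution that governs the deterministic part of the rate.

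The third step is verification: in each regime I would check that the two neglected terms are dominated by the active balanced pair (e.g. at $\gamma=\sqrt{A/B}$ one verifies $C\gamma^2 \le \sqrt{AB}$), and that the chosen $\gamma$ always satisfies $\gamma\le\gamma_{\max}$, which is automatic because we take the minimum. I do not expect a genuine obstacle, since $\Psi$ is convex in $\log\gamma$ and the optimization is routine. The hard part will be purely the bookkeeping of constants: propagating $D$, $D_5$ and the exact powers of $K$, $p$, and $q$ through $B$, $C$, and $\gamma_{\max}$ so that they reproduce the precise exponents in \eqref{eq28}. In particular, the $p^4q^2$ appearing in the constraint-dominated term (versus the $p^2q$ bounding the admissible step size) requires carefully tracking how $A/\gamma_{\max}$ and the coefficient $C$ combine, and this coefficient accounting is where I would concentrate the care.
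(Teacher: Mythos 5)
Your proposal is correct and is essentially identical to the paper's argument: the paper's one-line proof simply invokes the ``unroll recursion lemma'' (Lemma~\ref{unroll_lem}), which is exactly the $\frac{r_0}{T\gamma}+b\gamma+e\gamma^2$ decomposition and the $\gamma=\min\{(A/B)^{1/2},(A/C)^{1/3},\gamma_{\max}\}$ balancing you describe, yielding the three terms $2\sqrt{br_0/T}+2e^{1/3}(r_0/T)^{2/3}+ur_0/T$. The only difference is that you re-derive the tuning lemma while the paper cites it from \cite{liu2023decentralized,koloskova2020unified}; your closing remark about coefficient bookkeeping is apt, since matching the exact powers of $K$, $p$, and $q$ between Theorem~\ref{thm1} and \eqref{eq28} is indeed where all the remaining work lies.
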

\begin{proof}
    Follows from plugging Lemma \ref{unroll_lem} into Theorem \ref{thm1}.
\end{proof}
From observing \eqref{eq28}, the first two terms capture the effect of stochastic gradient variance on convergence. Choosing a larger $K$ decreases the effect of stochasticity on the convergence. We can also observe that the bound will be better when the values of $p$ and $q$ are large. A key observation is that the network topology that gives the tightest bound is under $p = q = 1$, which means the every time DS communication is performed, the server samples all clients and the topology of every in-subnet D2D communication is a fully connected graph.
By adding a stronger assumption on strong-convexity, we are able to get a stronger convergence result.
\begin{theorem}\textbf{(Strongly-convex)}
Under Assumptions \ref{asmp1}, \ref{asmp2},\ref{assmp3}, and \ref{assmp4}. Define the Lyapunov function $\mathcal{L}^t = \mathbb{E}\|\overline{x}_\mathrm{g} ^t - x^*\|^2 + \Gamma^t + \gamma Y^t + \gamma Z^t$ (see definition of $\Gamma^t, Y^t, Z^t$ in the appendix), then for a sufficiently small constant step size $\gamma$, we have
\begin{align}
    \textstyle\mathbb{E}\|\overline{x}_\mathrm{g} ^{T+1} \!- \!x^*\|^2 \leq (1 - \frac{\mu K\gamma}{4})^T\mathcal{L}^{1} + \|(I - A)^{-1}b\|_1,
\label{eq:24}
\end{align}
\label{thm2}
where $I$ is the identity matrix,
\small
\setlength{\arraycolsep}{1pt} 
\medmuskip = 1mu
%
%
\begin{gather*}
    \renewcommand*{\arraystretch}{1.5}
    A=\left(1-\frac{p}{2}\right)I +
    \gamma KL\begin{bmatrix}
        \frac{p-\mu \gamma K}{2\gamma KL} & 9(1 \!-\! p) & 72K^2\gamma & 72K^2\gamma\\
        \frac{14\gamma KL}{p} &  \frac{36\gamma L}{p} & \frac{14K}{pL} & \frac{14K}{pL} \\
        \frac{72\gamma^2K^2L^3}{p} & \frac{30L}{p} & \frac{240K^2\gamma L}{p} & \frac{240K^2\gamma L}{p}\\
        \frac{168\gamma^2K^2L^3}{q} & \frac{78L}{q} & \frac{624K^2\gamma L}{q} & \frac{624 K^2\gamma L}{q}\\
    \end{bmatrix},
\end{gather*}
\begin{gather*}
\renewcommand*{\arraystretch}{1.5}
    b = \begin{bmatrix}
        \frac{2\gamma^2K}{n} + 9K^2\gamma^3L\\
        K\gamma^2 + 3K^3\gamma^4L^2\\
        \frac{2\gamma}{qK} + \frac{30K^3\gamma^3L^2}{q}\\
        \frac{2\gamma}{qK} + \frac{78K^3\gamma^3L^2}{q}\\
    \end{bmatrix}\sigma^2.
\end{gather*}
\normalsize 

\end{theorem}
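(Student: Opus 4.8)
The plan is to establish a coupled linear recursion for the vector of error quantities $V^t = [\,\mathbb{E}\|x_\mathrm{g}^t - x^*\|^2,\ \Gamma^t,\ \gamma Y^t,\ \gamma Z^t\,]^\top$, where the four coordinates respectively track the global model error, the deviation quantity $\Gamma^t$, and the two scaled gradient-tracking errors $Y^t$ (inter-subnet) and $Z^t$ (in-subnet). Concretely, I would show that $V^t \preceq A\,V^{t-1} + b$ componentwise (in expectation), with $A$ and $b$ exactly the matrix and vector in the statement. Once this entrywise linear matrix recursion is in hand, unrolling gives $V^T \preceq A^{T-1}V^1 + \sum_{j=0}^{T-2} A^j b \preceq A^{T-1}V^1 + (I-A)^{-1}b$, and reading off the first coordinate (after bounding the other coordinates of $A^{T-1}V^1$ by the contraction factor) yields $\mathbb{E}\|x_\mathrm{g}^{T+1}-x^*\|^2 \le (1-\tfrac{\mu K\gamma}{4})^T \mathcal{L}^1 + \|(I-A)^{-1}b\|_1$. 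The factor $(1-\tfrac{p}{2})$ on the diagonal of $A$ and the $(1,1)$ entry $\tfrac{p-\mu\gamma K}{2\gamma KL}$ together produce the advertised $(1-\tfrac{\mu K\gamma}{4})$ spectral contraction once $\gamma$ is taken small enough, so the Lyapunov definition $\mathcal{L}^t = \sum_i V^t_i$ is precisely what makes $\|A^{T-1}V^1\|_1 \le (1-\tfrac{\mu K\gamma}{4})^T\mathcal{L}^1$.

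The first row of the recursion is the descent-type inequality for the global iterate. I would start from the server update, which in expectation advances $x_\mathrm{g}^{t+1}$ by the averaged (bias-cancelled) client drifts; strong convexity (Assumption~\ref{assmp4}) supplies the $-\mu K\gamma$ contraction of $\|x_\mathrm{g}^t-x^*\|^2$, while $L$-smoothness (Assumption~\ref{asmp1}) converts the residual gradient-mismatch terms into the $\Gamma^t$, $Y^t$, $Z^t$ contributions that populate the off-diagonal entries of the first row; the $\tfrac{2\gamma^2 K}{n}\sigma^2 + 9K^2\gamma^3 L\sigma^2$ entry of $b$ comes from the stochastic-gradient variance (Assumption~\ref{assmp3}), with the $1/n$ arising from averaging independent sample noise across clients. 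The second row, governing $\Gamma^t$, measures how far the in-subnet iterates have drifted between two global rounds, and I would bound its one-step growth using the mixing inequality (Assumption~\ref{asmp2}) together with the local-update recursion \eqref{eq3}--\eqref{eq4}. The third and fourth rows track the evolution of the two gradient-tracking variables via their defining updates \eqref{eq6} and \eqref{eq10}; the $1/p$ prefactors in row three and $1/q$ prefactors in row four reflect that the inter-subnet tracker is stabilised by the sampling-induced mixing rate $p$ while the in-subnet tracker is stabilised by the D2D spectral gap $q$.

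These four one-step bounds are exactly the content of the supporting lemmas referenced in the non-convex proof (Lemmas~\ref{noncvx_lemma}, \ref{dev_lem}, \ref{lem5}, \ref{lem6}, \ref{lem7}), so I would invoke their strongly-convex analogues rather than re-deriving each. The genuinely delicate step, and the main obstacle, is verifying that for sufficiently small $\gamma$ the matrix $A$ is a contraction in the sense that $(I-A)^{-1}$ exists with nonnegative entries and $\|A\|_\infty < 1$ on the relevant cone. Because $A = (1-\tfrac{p}{2})I + \gamma KL\,M$ with $M$ having entries that scale in mixed powers of $\gamma$, $K$, $L$, $1/p$, and $1/q$, one must check that the step-size threshold is small enough that the $\gamma KL\,M$ perturbation does not overwhelm the $(1-\tfrac p2)$ diagonal—this is the strong-convexity counterpart of the constant bookkeeping ($c_0,c_1,c_2$ and $\gamma < \tfrac{p^2q}{945KL}$) seen in Theorem~\ref{thm1}. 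I would handle this by bounding $\|A\|_\infty$ row by row, extracting for each row an upper bound on $\gamma$ that keeps the row sum below $1-\tfrac{\mu K\gamma}{4}$, and taking the minimum; the Neumann series then guarantees $(I-A)^{-1}=\sum_{j\ge0}A^j$ converges with nonnegative entries, legitimising both the unrolling and the $\|(I-A)^{-1}b\|_1$ residual term.
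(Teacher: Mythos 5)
Your proposal matches the paper's proof essentially exactly: the paper also assembles Lemmas~\ref{scvx_lemma}, \ref{dev_lem}, \ref{lem5}, \ref{lem6}, and \ref{lem7} into the componentwise recursion $[\mathbb{E}\|x_\mathrm{g}^t-x^*\|^2,\ \Gamma^t,\ \gamma Y^t,\ \gamma Z^t]^\top \preceq A[\cdots]^\top + b$, shows $\rho(A)\le\|A\|_1\le 1-\frac{\mu K\gamma}{4}$ under the step-size condition \eqref{eq:14}, unrolls with the Neumann series $\sum_t A^t b \preceq (I-A)^{-1}b$, and lower-bounds $\mathcal{L}^{T+1}$ by its first coordinate. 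The only cosmetic difference is your use of the induced $\infty$-norm (row sums) where the paper uses the $1$-norm (column sums), both of which suffice since $\|V^1\|_\infty\le\|V^1\|_1=\mathcal{L}^1$.
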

\begin{proof}
    By combining Lemmas \ref{scvx_lemma}, \ref{dev_lem}, \ref{lem5}, \ref{lem6}, \ref{lem7}, we can form the recursive form:
{\small    \begin{equation*}
        \textstyle\begin{bmatrix}
            \mathbb{E}\|\overline{x}_\mathrm{g} ^t \!-\! x^*\|^2\\
            \Gamma^t\\
            \gamma Y^t\\
            \gamma Z^t
        \end{bmatrix}
        \leq         
        A\begin{bmatrix}
            \mathbb{E}\|\overline{x}_\mathrm{g} ^{t-1} \!-\! x^*\|^2\\
            \Gamma^{t-1}\\
            \gamma Y^{t-1}\\
            \gamma Z^{t-1}
        \end{bmatrix} + b.
    \end{equation*}}
If we choose the our step size with the following inequality:
\begin{align}
    \textstyle\gamma \leq \min\Big(&\textstyle\frac{\min(p,q)\mu}{K(14L^2+240L^3)},\frac{1}{18KL}, \frac{4}{\mu K},\textstyle\frac{\min(p, q)p}{2(45KL + 108KL^2 + K\mu/4)},\nonumber\\&\textstyle\frac{\min(p,q)^2}{2(86K^2+864K^2L+K\mu/4)}\Big),
    \label{eq:14}
\end{align}
then we have $\rho(A) \leq \|A\|_1 \leq 1 - \frac{\mu K\gamma}{4} < 1$. Unrolling the stochastic noise part $\sum_{t=0}^{T-1}A^tb \leq (I - A)^{-1}b$, we have:
{\begin{align*}
\textstyle\mathcal{L}^{T+1} \leq (1 - \frac{\mu K\gamma}{4})^T\mathcal{L}^{1} + \|(I - A)^{-1}b\|_1.
\end{align*}}
Lower bound $\mathcal{L}^{T+1}$ with $ \mathbb{E}\|x_\mathrm{g} ^{T+1} - x^*\|^2$,
then the proof is complete.
\end{proof}
The second term on the RHS of \eqref{eq:24} is related to stochastic gradient noise $\sigma^2$, which can be controlled by the step size. By choosing a specific step size, we obtain the following result.
\begin{corollary}
    Under the same conditions as in Theorem \ref{thm2}, if we define $\zeta_0 = \mathbb{E}\|\overline{x}_\mathrm{g} ^1 - x^*\|^2 + \Gamma^1$, $\zeta_1 = Y^1 + Z^1$, and the RHS of \eqref{eq:14} to be $\overline{\gamma}$, then there exists a constant step size $\gamma = \min(\overline{\gamma}, \frac{\ln(\max(1, \mu K(\zeta_0 + \Bar{\gamma} \zeta_1)T/\sigma^2))}{\mu K T})$ such that:
{\begin{align}
    \textstyle\mathbb{E}\|\overline{x}_\mathrm{g} ^{T+1} - x^*\|^2 \leq&\textstyle \Tilde{\mathcal{O}}\bigg(\exp( - pqT)\cdot(\zeta_0 + \zeta_1)\bigg) \notag\\
    &\textstyle+ \mathcal{O}\left(\frac{\sigma^2}{\mu KT}\right) + \Tilde{\mathcal{O}}\left(\frac{L^5\sigma^2}{\mu^5T^5pq}\right).
\label{eq:cor2}
\end{align}}
\end{corollary}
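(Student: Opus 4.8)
The plan is to treat this as a step-size optimization of the recursive bound already furnished by Theorem~\ref{thm2}, namely $\mathbb{E}\|x_\mathrm{g}^{T+1}-x^*\|^2 \le (1-\tfrac{\mu K\gamma}{4})^T \mathcal{L}^1 + \|(I-A)^{-1}b\|_1$. The right-hand side has the familiar shape of a geometrically decaying optimization error plus a $\gamma$-dependent noise floor, so the whole task reduces to choosing $\gamma$ to balance the two contributions. First I would rewrite $\mathcal{L}^1 = \zeta_0 + \gamma\zeta_1$ directly from the definition of $\mathcal{L}^t$, and, since the admissible region is $\gamma \le \overline{\gamma}$, bound it by $\mathcal{L}^1 \le \zeta_0 + \overline{\gamma}\zeta_1$; this is precisely the quantity appearing inside the logarithm of the prescribed step size. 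Then I would replace the geometric factor by its exponential surrogate, $(1-\tfrac{\mu K\gamma}{4})^T \le \exp(-\tfrac{\mu K\gamma T}{4})$.

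The substantive step is to expand the noise floor $\|(I-A)^{-1}b\|_1$ as a function of $\gamma$. Because $\|A\|_1 \le 1-\tfrac{\mu K\gamma}{4}$ was already established inside the proof of Theorem~\ref{thm2}, the Neumann series $(I-A)^{-1}=\sum_{t\ge 0}A^t$ converges and is entrywise nonnegative, so it suffices to read off the $\gamma$-scaling of each component of the fixed point $v=(I-A)^{-1}b$. Using the first row of $I-A$, whose diagonal entry is $\tfrac{\mu\gamma K}{2}$, together with the explicit $\gamma$-dependence of $A$ and $b$, I would show that $\|v\|_1$ splits into a leading linear-in-$\gamma$ part of size $\mathcal{O}(\gamma\sigma^2)$ and a higher-order part of size $\mathcal{O}(\tfrac{L^5 K^5 \gamma^5}{pq}\sigma^2)$, the $\tfrac{1}{pq}$ and fifth-power factors originating from the $\tfrac1p,\tfrac1q$ entries in the lower rows of $A$ amplified through the inverse.

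With the bound now in the canonical form $\exp(-\tfrac{\mu K\gamma T}{4})(\zeta_0+\overline{\gamma}\zeta_1) + c_1\gamma\sigma^2 + c_5\tfrac{L^5K^5}{pq}\gamma^5\sigma^2$, I would invoke the standard step-size tuning argument. If $T$ is small enough that the logarithmic candidate exceeds $\overline{\gamma}$, then $\gamma=\overline{\gamma}$ is active, the noise terms are negligible, and the exponential term dominates; since the binding constraints in $\overline{\gamma}$ scale like $\min(p,q)^2$ in their $p,q$-dependence, the exponent is $\propto \min(p,q)^2 T$, reported as $\tilde{\mathcal{O}}(\exp(-pqT)(\zeta_0+\zeta_1))$ after absorbing the $\mu,K,L$ constants. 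Otherwise $\gamma=\tfrac{\ln(\cdot)}{\mu K T}$ is active; substituting it into the linear noise term gives $\tilde{\mathcal{O}}(\tfrac{\sigma^2}{\mu K T})$, and into the fifth-order term gives $\tfrac{L^5K^5}{pq}(\tfrac{1}{\mu K T})^5\sigma^2 = \tilde{\mathcal{O}}(\tfrac{L^5\sigma^2}{\mu^5 T^5 pq})$. Combining the two regimes produces the three stated terms, and a final lower bound of $\mathcal{L}^{T+1}$ by $\mathbb{E}\|x_\mathrm{g}^{T+1}-x^*\|^2$ closes the argument.

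I expect the main obstacle to be the second step: certifying the correct $\gamma$-scaling of $\|(I-A)^{-1}b\|_1$ rather than a crude one. The naive estimate $\|(I-A)^{-1}b\|_1 \le \tfrac{4}{\mu K\gamma}\|b\|_1$ loses a factor of $\gamma$ and would leave a $\gamma$-independent floor of order $\tfrac{\sigma^2}{\mu K^2 q}$ that never decays in $T$, so it is essential to exploit the block structure of $A$, in particular the $\tfrac{\mu\gamma K}{2}$ on the first diagonal and the way the leading entries of $b$ feed through the coupling, to establish that the genuinely dominant component of $v$ is only $\mathcal{O}(\gamma\sigma^2)$. Carefully tracking which entries generate the $\tfrac{1}{pq}$ and $\gamma^5$ behavior, while discarding lower-order-in-$T$ clutter under $\tilde{\mathcal{O}}$, is where the bookkeeping is heaviest.
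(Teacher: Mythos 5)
Your proposal is correct and follows essentially the same route as the paper: start from the bound of Theorem~\ref{thm2}, replace $(1-\tfrac{\mu K\gamma}{4})^T$ by its exponential surrogate with $\mathcal{L}^1=\zeta_0+\gamma\zeta_1\le\zeta_0+\overline{\gamma}\,\zeta_1$, bound the noise floor $\|(I-A)^{-1}b\|_1$ as a function of $\gamma$, and then apply the standard two-regime tuning with $\gamma=\min\bigl(\overline{\gamma},\tfrac{\ln(\cdot)}{\mu KT}\bigr)$. The only point of divergence is the accounting of the noise floor: the paper simply asserts $\|(I-A)^{-1}b\|_1=\mathcal{O}\bigl(\tfrac{K^5L^5\sigma^2\gamma^5}{pq}\bigr)$ and extracts the $\mathcal{O}(\tfrac{\sigma^2}{\mu KT})$ term from the exponential evaluated at the logarithmic step size, whereas you retain an explicit $\mathcal{O}(\gamma\sigma^2)$ leading component of $(I-A)^{-1}b$ and derive that term from it --- a more careful bookkeeping, given that the third and fourth entries of $b$ are $\Theta(\gamma)$, and one that lands on the same three-term bound.
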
 
\begin{proof}
    Starting from \eqref{eq:24}, we find see that:
{\small\begin{align*}
    \textstyle\mathbb{E}\|\overline{x}_\mathrm{g} ^{T+1} - x^*\|^2 &\textstyle\leq (1 - \frac{\mu K\gamma}{4})^T\mathcal{L}^{1} + \|(I - A)^{-1}b\|_1\\
    &\textstyle\leq \exp( - \frac{\mu K\gamma}{2}T)\mathcal{L}^1 + \mathcal{O}\left(\frac{K^5L^5\sigma^2\gamma^5}{pq}\right).
\end{align*}}
If we define the RHS of \eqref{eq:14} to be $\overline{\gamma} = \mathcal{O}(\frac{\mu pq}{L^3K})$ and choose $\gamma = \min(\overline{\gamma}, \frac{\ln(\max(1, \mu K(\zeta_0 + \Bar{\gamma} \zeta_1)T/\sigma^2))}{\mu K T})$, then we have:
{\small\begin{align*}
        \mathbb{E}\|\overline{x}_\mathrm{g} ^{T+1} - x^*\|^2 \leq& \mathcal{O}\bigg(\exp( - \frac{\mu K\overline{\gamma}}{2}T)(\zeta_0 + \overline{\gamma}\zeta_1)\bigg)\\
        &+ \mathcal{O}\left(\frac{\sigma^2}{\mu KT}\right) + \Tilde{\mathcal{O}}\left(\frac{L^5\sigma^2}{\mu^5T^5pq}\right).
\end{align*}}
Plug in $\overline{\gamma}$ 
and the proof is complete.
\end{proof}
With the strong-convexity assumption, the theorem demonstrates that our algorithm has a linear convergence rate under $\sigma^2 = 0$. The terms $Y^t$ and $Z^t$ in both Lyapunov functions capture how the two gradient tracking terms $y^t$ and $z^t$ approximate the gradient information in the network, respectively. The term $\Gamma^t$ tracks the distance between the aggregated model during DS communication and the local models before aggregation. In both corollaries
, we observe the that these three terms converge to zero along with $\frac{1}{T}\sum_{t=1}^T\mathbb{E}\|\nabla f(\overline{x}_\mathrm{g} ^t)\|^2$ and $\mathbb{E}\|\overline{x}_\mathrm{g} ^{T+1} \!-\! x^*\|^2$ when $T$ goes to infinity.


\subsection{Learning-Efficiency Co-Optimization}
\label{sec:IV-C}
Based on Corollary \ref{cor1}, we can see that the convergence speed is determined by $\frac{1}{p^4q^2}$. Without any additional conditions, the best choice to maximize the convergence speed is setting $p = q = 1$.
However, recall that the main objective of employing a semi-decentralized FL setting is to save on the communication costs between clients and the FL server, which is contradicted by choosing $p = 1$ (because the communication cost between devices and server would be maximized). We therefore propose a method for the central server to trade-off between convergence speed and communication cost, e.g., the energy or monetary cost for wireless bandwidth usage.

For each subnet $\mathcal{C}_s$, the server estimates a communication cost of $E_{s}$ for pulling and pushing variables from the subnet $\mathcal{C}_s$ and a communication cost of $E_{s}^{D2D}$ for every round of D2D communication performed by the subnet $\mathcal{C}_s$. As in the case of Theorem \ref{thm1}, we define $\beta_{s} = \frac{m_{s} - h_{s}}{m_{s}}$ to be the ratio of unsampled clients from each subnet. The server can solve the following minimization problem with a given the balance terms $\lambda_1,\lambda_2,\lambda_3,\lambda_4 > 0$.
{\begin{equation}
    \begin{aligned}
    \min_{\beta_{1}, \ldots, \beta_{S}, p, K}\quad &\textstyle \frac{1}{p^4} + \lambda_1 \sqrt{\frac{1}{K}} + \lambda_2(\frac{1}{Kp^2})^{\frac{2}{3}} \\
    &\textstyle+ \lambda_3\sum_{s=1}^S(1 - \beta_{s})\cdot E_{s}\\
    &\textstyle+ \lambda_4 K\sum_{s=1}^S E^{D2D}_{s},\\
    \textrm{subject to} \quad &\textstyle 0 \leq \beta_{s} \leq \frac{m_{s} - 1}{m_{s}},\quad 1\leq i \leq l,\\
    &\textstyle p = \min(1 - \beta_{1}^2, \ldots, 1 - \beta_{s}^2).
    \end{aligned}
    \label{eq16}
\end{equation}}
The ratio between $E^{D2D}_{s}$ and $E_{s}$ is set to  $E_{s}^{D2D}/E_{s} = \delta$
. A small $\delta$ means that the cost for a client to communicate with its neighbors using D2D communication is much cheaper than to perform a centralized aggregation.

Optimization \eqref{eq16} 
has a similar form to a geometric program (GP), with the exception of the last constraint $p = \min(1 - \beta_{1}^2, \ldots, 1 - \beta_{S}^2)$. By relaxing it to $p \leq \min(1 - \beta_{1}^2, \ldots, 1 - \beta_{S}^2)$, it becomes a posynomial inequality constraint which can be handled by GP~\cite{boyd2007tutorial}.
The solution of the relaxed problem will remain the same as \eqref{eq16}, which is provable via a straightforward contradiction. It is worth comparing this procedure to the more complex control algorithm proposed in \cite{lin2021semi}, where one must adaptively choose a smaller $K$ when client models in a subnet deviate from each other too quickly. With our GT methodology, the subnet drift is inherently controlled, allowing a constant $K$ throughout the whole training process.

\section{Numerical Evaluation}
\label{sec:V}


\begin{figure*}
\centerline{\includegraphics[width=\textwidth]{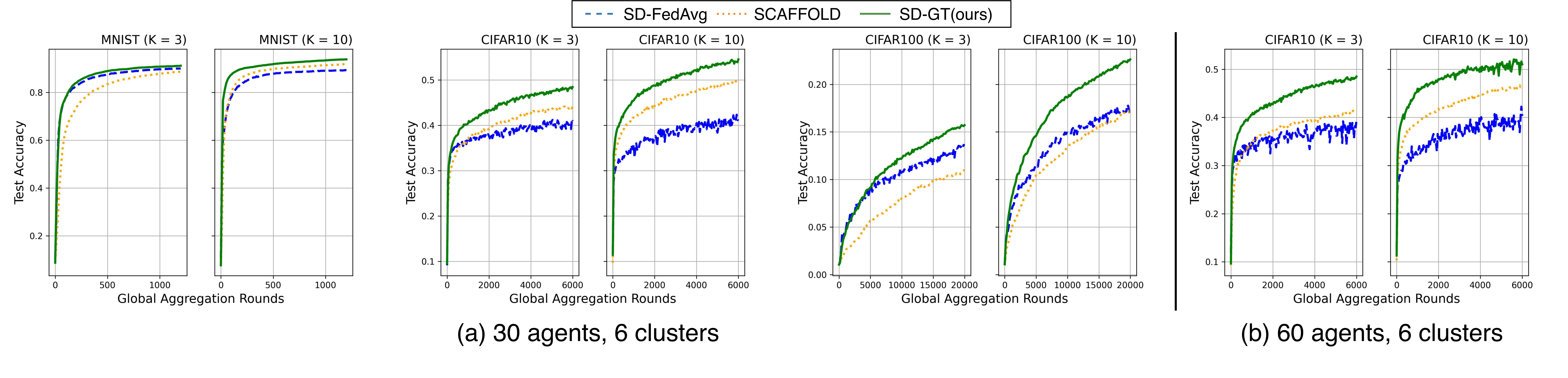}}\vspace{-0.15in}
\caption{Experimental results on real-world datasets ($\frac{h_{s}}{m_{s}} = 40\%$). By fixing the sampling rate for each subnet to $40$ percent and observe the effect of performing multiple D2D communication rounds, we see that our algorithm SD-GT gains the most improvement from increasing the D2D rounds. The advantage of our method can still be observed even if we double the number of clients in the system.\vspace{-0.15in}}
\label{fig3}
\end{figure*}
\begin{figure}
    \centering    \centerline{\includegraphics[width=0.5\textwidth]{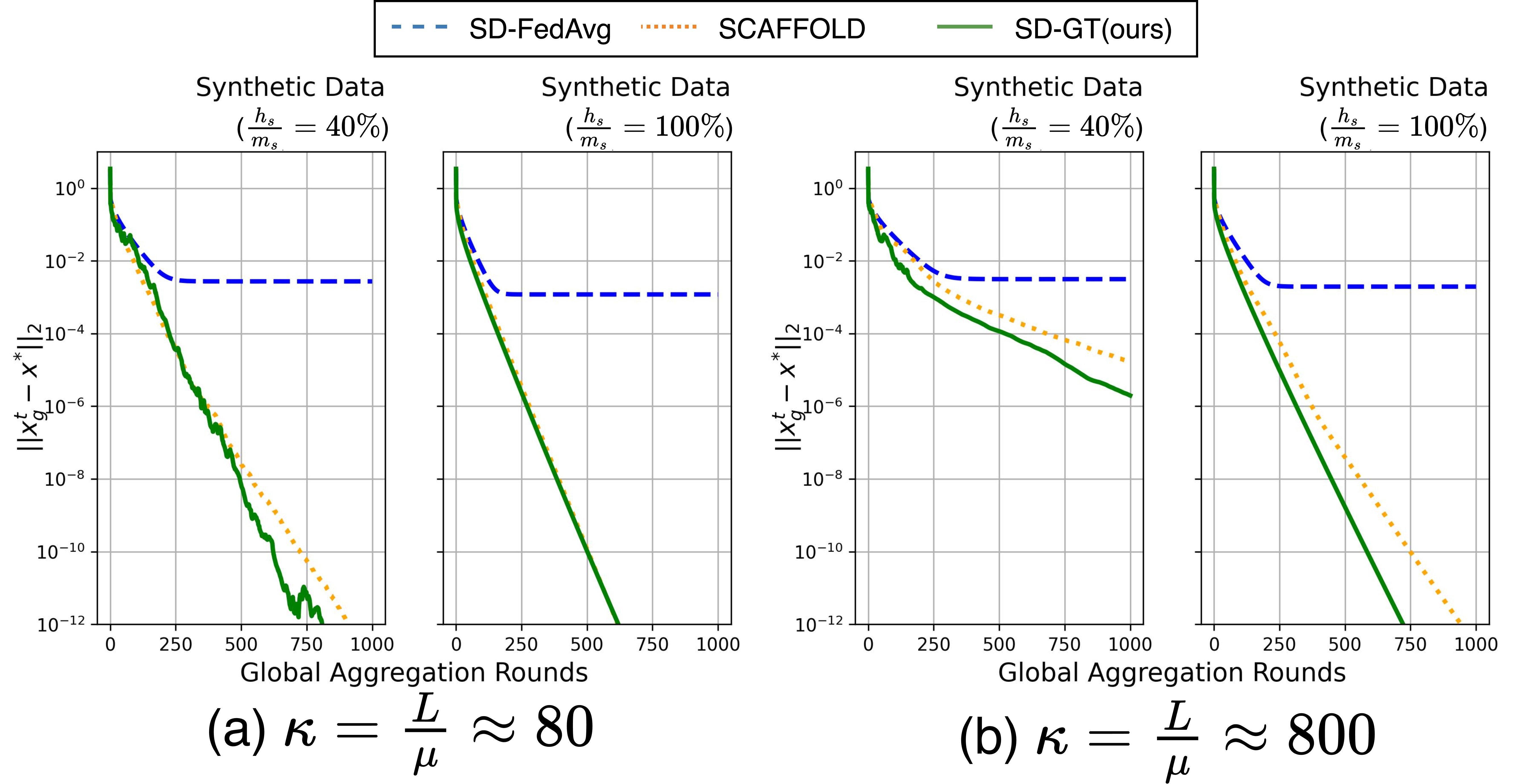}}
    \caption{Experimental results from synthetic dataset $(K = 40)$. By fixing the number of D2D rounds performed between two global aggregations, we see that our algorithm SD-GT is able to improve from increasing the sampling rate of each subnet while maintaining linear rate.\vspace{-0.15in}}
\label{fig4}
\end{figure}

\subsection{Experimental Setup}

\textbf{System Setup.} We set the system to have $n = 30$ clients, the only exception being Figure \ref{fig3}b, where the system contains 60 clients. All system have $S = 6$ subnets, each with same number of clients. Each subnet's graph structure is a generated as a random geometric graph with radius between $0.5$ and $3.5$. The doubly stochastic weight matrices $W_1,...,W_S$ for each subnet are generated using the Metropolis-Hasting rule.

\textbf{Datasets.} There are in total four datasets that we use for testing. The first three are all image classification datasets, and last one one is a synthetic dataset.

\textit{(i) Real-world Datasets:} We first consider a neural network classification task using a cross entropy loss function on image datasets (MNIST\cite{lecun1998gradient}, CIFAR10, CIFAR100\cite{krizhevsky2009learning}). For all three datasets we use the training set for training and testing set for evaluation. Let $\mathcal{D}_i$ be the dataset allocated on client $i$; we set $|\mathcal{D}_i|$ to be the same $\forall i$. To simulate high data heterogeneity, all three datasets are partitioned in a non-i.i.d manner such that each client in the MNIST and CIFAR10 experiments only holds data of one class and each client in the CIFAR100 experiments only holds data from four classes.

\textit{(ii) Synthetic Dataset:} We consider a strongly convex Least Square (LS) problem using synthetic data. The reason that we show results on this dataset is to demonstrate the linear convergence of our algorithm under strong-convexity. In the LS problem, each client $i$ estimates an unknown signal $x_0 \in\mathbb{R}^d$ through linear measurements $b_i = A_ix_0 +
n_i$, where $A_i \in \mathbb{R}^{{|\mathcal{D}_i|}\times d}$ is the sensing matrix, and $n_i \in \mathbb{R}^{|\mathcal{D}_i|}$ is the additive noise. 
The value of $d$ is set to 200 and $x_0$ is a vector of i.i.d. random variables drawn from a normal distribution $\mathcal{N}(0,1)$. Each client $i$ receives $|\mathcal{D}_i| = 30$ observations. All additive noise is sampled from the same distribution $\mathcal{N}(0, 0.04)$. The sensing matrix $A_i$ is generated as follows: For the $j$\textsuperscript{th} row of the sensing matrix $a_{ij} \in \mathbb{R}^d$, let $z_1, ..., z_d$ be an i.i.d. sequence of $\mathcal{N}(0,1)$ variables, and fix some correlation parameter $\omega \in [0, 1)$. We initialize by setting the first element of the $j$\textsuperscript{th} row $a_{ij,1} = \frac{z_1}{\sqrt{1 - \omega^2}}$, and generate the remaining entries by applying the recursive update
$a_{ij,t+1} = \omega a_{ij,t} + z_{t+1}$, for $t = 1, 2, \ldots, d-1$.

\textbf{Baselines.} We consider two baselines for comparison with SD-GT.

\textit{(i) SD-FedAvg:} Our first baseline is a semi-decentralized version of FedAvg\cite{lin2021semi} (denoted as SD-FedAvg). Each client updates using only its local gradient, and communicates with its nearby neighbors within each subnet using D2D communication after every gradient computation. A global aggregation is conducted after every $K$ rounds of gradient computation. This baseline does not contain gradient tracking, thus allowing us to assess this component of our methodology.

\textit{(ii) SCAFFOLD:} We also run a comparison with SCAFFOLD\cite{karimireddy2020scaffold}, which is a centralized gradient tracking algorithm that samples the network for aggregation. This baseline does not contain semi-decentralized local model updates, which means while SD-FedAvg and SD-GT performs $K$ rounds of D2D communication and model update, SCAFFOLD computes $K$ rounds of local on-device updates. Comparing with SCAFFOLD allows us to assess the benefit of our methodology co-designing semi-decentralized updates with gradient tracking.


\begin{figure}
    \centering    \centerline{\includegraphics[width=0.5\textwidth]{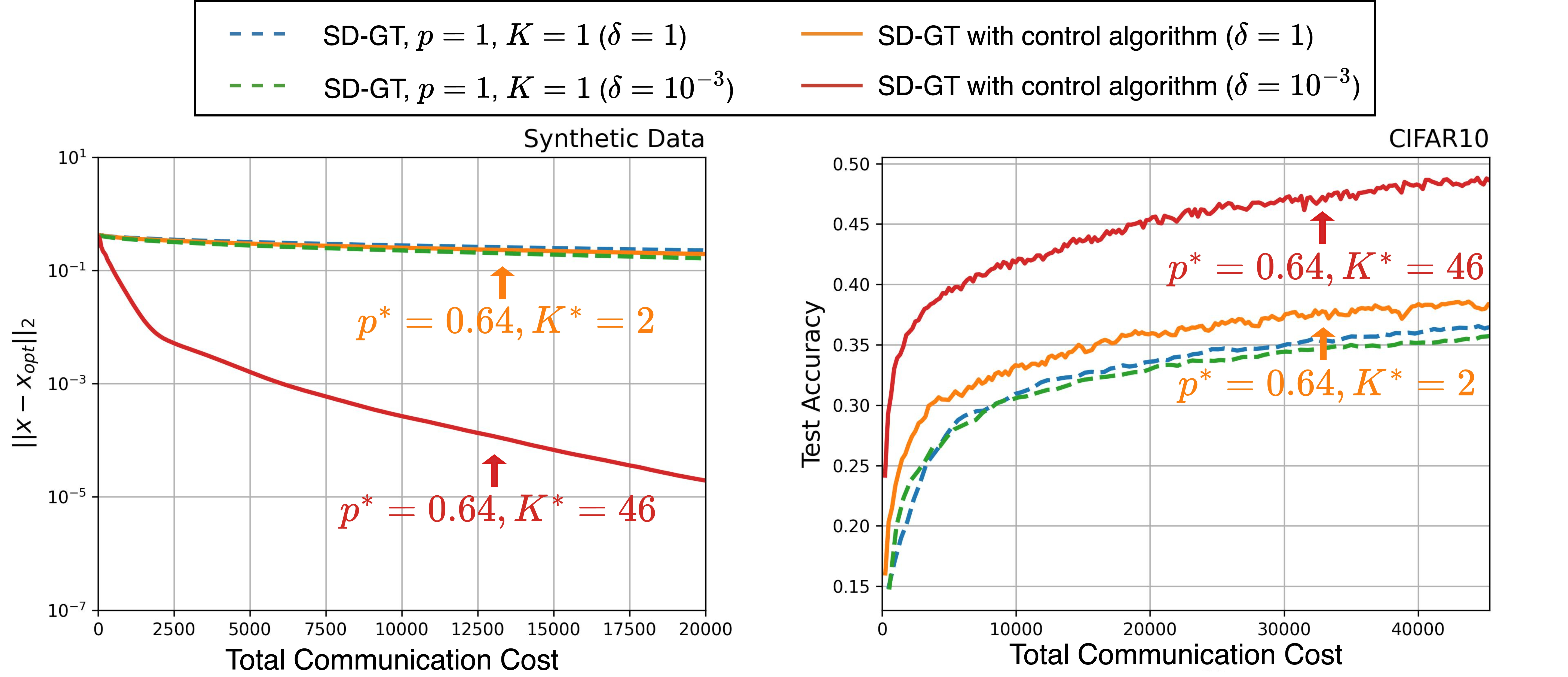}}
\caption{Experiments on the proposed co-optimization algorithm. When $\delta$ is small (D2D communication is cheap), the co-optimization algorithm is able to choose the sample rate and the number of D2D communication rounds such that more improvement is obtained using the same amount of communication.\vspace{-0.15in}}
\label{fig5}
\end{figure}

\textbf{Parameter Settings.} For experiments using the synthetic dataset, we tuned the value of $\omega$ to create two settings: (i) $\kappa \approx 80$, which is a simpler learning task, and (ii) $\kappa \approx 800$, which is more complicated. The step size for synthetic dataset experiments is set to $10^{-4}$. The step size is set to $\gamma = 1 \times10^{-2}$ for MNIST and $\gamma = 3 \times 10^{-3}$ for both CIFAR10 and CIFAR100. The neural network that we use for ML-tasks is a two layer fully-connected neural network with ReLU activations. We use deterministic gradients for the synthetic dataset and a batch-size of $512$ for the three ML datasets.

For experiments on our co-optimization algorithm, the communication cost $E_{s}$ for each subnet is a randomly drawn number between $1$ and $100$. The balance terms are set to $\lambda_1 = \lambda_2 = 1$,  $\lambda_3 = 10^{-1}$ and $\lambda_4 = 10^{-2}$. 

\subsection{Results and Discussion}
\textbf{Experiment 1: Convergence results for varying D2D rounds.} In Figure \ref{fig3}, we compare the model training performance of the algorithms on the image classification datasets. In these experiments, the central server samples $40\%$ of the clients from each subnet. We change the number of D2D rounds from $K = 3$ to $K = 10$ to observe the effect of performing multiple rounds of in-subnet consensus operations. In the MNIST and CIFAR10 experiments, we can see that SD-FedAvg struggles to gain improvement from increasing the number of D2D communications, while SD-GT is able to track the gradient information from other parts of the network and \textit{obtains better results when using larger number of D2D rounds}. Since in our setting we consider D2D communications to be relatively low cost compare to DS communications, we also see that \textit{the convergence of SCAFFOLD, which does not include D2D communication, is slower than SD-GT} even though both algorithms uses gradient tracking to correct the update direction. 

In experiments utilizing CIFAR100, unlike the results in MNIST and CIFAR10 experiments, SD-FedAvg is able to gain improvements from increasing the D2D rounds in the CIFAR100 experiments. This result is similar to how \cite{liu2023decentralized} compares between D-SGD and their method by running experiments on a evenly distributed MNIST and a non-i.i.d. MNIST. GT based methods do not gain much improvement over non-GT based methods when all clients are holding i.i.d. datasets, but gradient tracking is very important when dealing with networks with non-i.i.d. data. \textit{Our algorithm deals with data-heterogeneity better than algorithms that do not use gradient tracking, gaining more improvement from increasing the number of in-subnet D2D communication rounds.}

\textbf{Experiment 2: Convergence results for varying DS sample rates.} Figure \ref{fig4} compares the learning performance between the algorithms on the synthetic dataset. We set the number of D2D rounds to $K = 40$ for all experiments and compare between different sampling rates $\frac{h_s}{m_s}$. We see that SD-GT has a linear convergence to the global optimal solution under strong-convexity for any sampling rate and under different $\kappa$ values, which aligns with our theoretical results. When $\kappa$ is large, D2D communication improves the overall convergence speed, which gives our algorithm a faster linear convergence speed compared to SCAFFOLD, and SD-FedAvg converges to a radius away from the optimal solution related to the step size and the data-hetorogeneity assumption\cite{nemirovski2009robust, li2019convergence}. 
We also see that SCAFFOLD, while converging linearly, does so at a slower rate than SD-GT when $\kappa$ is large. The lack of D2D communication in SCAFFOLD makes the model aggregated from device to server contain gradient information of only a single device instead of the whole subnet. \textit{This emphasizes the performance benefits of combining D2D communications with gradient tracking in our algorithm when compared to the baselines.}

\textbf{Experiment 3: Co-optimization algorithm under different communication costs.} Figure \ref{fig5} evaluates the impact of {\tt SD-GT}'s learning-efficiency co-optimization procedure. Overall, we see that the optimization leads to improved SD-GT performance in terms of communication efficiency for the same learning quality. We compare the improvement gained from our co-optimization algorithm under a larger delta ($\delta=1$) and a smaller delta ($\delta = 10^{-3}$). When $\delta$ is large, which means that for each client, DS and D2D communications have similar cost, there is no need to perform multiple D2D communication rounds because the communication efficiency of using the co-optimization algorithm is similar to simply aggregating all clients at the central server after every gradient computation. However, when $\delta$ is small (i.e., in-subnet D2D communication is much cheaper than DS communication), the co-optimization algorithm's balancing between convergence speed and communication efficiency results in better convergence while incurring the same communication cost.

\section{Conclusion}
\noindent We developed SD-GT, the first gradient-tracking based semi-decentralized federated learning (FL) methodology. SD-GT incorporates dual gradient tracking terms to mitigate the subnet-drift challenge. We provided a convergence analysis of our algorithm under non-convex settings and strongly-convex settings, revealing conditions under which linear and sub-linear convergence rates are obtained. Based on our convergence results, we developed a low-complexity co-optimization algorithm that trades-off between learning quality and communication cost. Our subsequent experimental results demonstrated the improvements provided by SD-GT over baselines in SD-FL and gradient tracking literature.

\appendix

\subsection{Notation}
\label{Appendix:A}
For the proof, we define several matrix form notations. Let ${y^t} = [y_1^t, \ldots, y_n^t]$, ${z^t} = [z_1^t, \ldots, z_n^t]$, and ${x^{t,k}} = [x_1^{t,k},\ldots,x_n^{t,k}]$ be the matrix form of the variables. Let $\nabla F({x^{t,k}}) = [\nabla f_1(x_1^{t,k}), \ldots, \nabla f_n(x_n^{t,k})]$ be the matrix form of the gradient. Define $J = \frac{\mathbf{1_n} \mathbf{1_n}^T}{n}$ be the averaging matrix of all clients and $J_c = \mathrm{diag}(\frac{\mathbf{1_{m_{1}}} \mathbf{1_{m_{1}}}^T}{m_{1}},\ldots,\frac{\mathbf{1_{m_{S}}} \mathbf{1_{m_{S}}}^T}{m_{S}})$ to be the block diagonal matrix of averaging matrices for each subnet. Let $W = \mathrm{diag}(W_{1},\ldots, W_{S})$ be the block diagonal matrix of all weighted matrices for each subnet. Let $p$ and $q$ be constants that represents the mixing ability of inter-subnet and in-subnet (defined in Theorem \ref{thm1}). Define two block diagonal matrix $B = \mathrm{diag}(\beta_{1} I_{\mathcal{C}_1}, \ldots, \beta_{S} I_{\mathcal{C}_S})$ and $B^{'}= \mathrm{diag}((1- \beta_{1}) I_{1}, \ldots, (1-\beta_{S}) I_{\mathcal{C}_S})$ for Lemma \ref{lem5}.

We define $\Delta^t = \frac{1}{n}\sum_{i=1}^n\sum_{k=1}^K\mathbb{E}\|x_i^{t, k} - x_\mathrm{g} ^{t}\|^2$ to be the \textbf{client drift }term, $\Gamma^t = \frac{1}{n}\sum_{i=1}^n\mathbb{E}\|x_i^{t-1,K+1} - x_\mathrm{g} ^t\|^2$ be the \textbf{sampling error }term, and $Z^t = \frac{1}{n}\mathbb{E}\|z^t + \nabla F(x_\mathrm{g} ^t)(I - J_c)\|^2_F$, $Y^t = \frac{1}{n}\mathbb{E}\|{y^t} + \nabla F(x_\mathrm{g} ^t)(J_c - J)\|_F^2$ to be the \textbf{in-subnet} and \textbf{inter-subnet correction} terms, respectively.

\subsection{Supporting Lemmas}

\begin{lemma}(Unroll recursion lemma)
    For any parameters $r_0 \geq 0, b \geq 0, e\geq 0, u\geq 0$, there exists a constant step size $\gamma < \frac{1}{u}$ s.t. 
{\small\begin{equation}
    \textstyle\Psi_T := \frac{r_0}{T}\frac{1}{\gamma} + b\gamma + e\gamma^2 \leq 2\sqrt{\frac{br_0}{T}} + 2e^{\frac{1}{3}}(\frac{r_0}{T})^{\frac{2}{3}} + \frac{ur_0}{T}.
\end{equation}}
    \label{unroll_lem}
    \vspace{-0.30in}
\end{lemma}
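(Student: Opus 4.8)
The plan is to prove this by an explicit, near-optimal choice of the constant step size followed by a term-by-term bound. Concretely, I would set
$$\gamma = \min\left\{\frac{1}{u},\ \left(\frac{r_0}{bT}\right)^{1/2},\ \left(\frac{r_0}{eT}\right)^{1/3}\right\},$$
which is the value one obtains by balancing the dominant $\frac{r_0}{T\gamma}$ term against each of $b\gamma$ and $e\gamma^2$ in turn (solving $\frac{r_0}{T\gamma}=b\gamma$ gives $\gamma=(r_0/(bT))^{1/2}$, and $\frac{r_0}{T\gamma}=e\gamma^2$ gives $\gamma=(r_0/(eT))^{1/3}$), while never exceeding the feasibility cap $1/u$. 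Since $r_0,b,e,u\geq 0$, this $\gamma$ is well defined — terms with a vanishing denominator are $+\infty$ and simply drop out of the minimum — and is strictly positive whenever $r_0>0$.

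The core of the argument is that, with this choice, each of the three summands of $\Psi_T$ is dominated by a corresponding summand on the right-hand side. Since $\gamma\leq(r_0/(bT))^{1/2}$, the middle term obeys $b\gamma\leq\sqrt{br_0/T}$; since $\gamma\leq(r_0/(eT))^{1/3}$, the last term obeys $e\gamma^2\leq e^{1/3}(r_0/T)^{2/3}$. For the first term I would exploit that $\gamma$ is a minimum, so $1/\gamma$ is the corresponding maximum, and then apply the elementary inequality $\max\{a_1,a_2,a_3\}\leq a_1+a_2+a_3$ for nonnegative reals to write
$$\frac{1}{\gamma}\leq u+\left(\frac{bT}{r_0}\right)^{1/2}+\left(\frac{eT}{r_0}\right)^{1/3}.$$
Multiplying through by $r_0/T$ gives $\frac{r_0}{T\gamma}\leq\frac{ur_0}{T}+\sqrt{br_0/T}+e^{1/3}(r_0/T)^{2/3}$, and summing this with the two bounds above reproduces exactly $\frac{ur_0}{T}+2\sqrt{br_0/T}+2e^{1/3}(r_0/T)^{2/3}$, which is the claimed right-hand side.

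The only genuinely delicate point is the \emph{strict} inequality $\gamma<1/u$ demanded by the statement, since the minimum above only guarantees $\gamma\leq1/u$. I would resolve this by noting that the bound is proved with slack: evaluated at $\gamma=1/u$, the terms $b\gamma$ and $e\gamma^2$ each consume only one of the two available copies on the right-hand side, so $\Psi_T(1/u)$ lies strictly below the target whenever $b>0$ or $e>0$. Because $\gamma\mapsto\Psi_T(\gamma)$ is continuous, the inequality therefore persists for some $\gamma$ slightly less than $1/u$, furnishing a valid strict choice; the degenerate cases ($u=0$, where $1/u=\infty$ makes strictness automatic, and $b=e=0$, where the cap is the only active constraint) are immediate. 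I expect no substantive obstacle beyond this bookkeeping — the lemma is a standard step-size–tuning result, and essentially all the work is in selecting the right $\gamma$ and then reading off the three one-line inequalities.
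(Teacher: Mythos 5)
Your argument is correct and is essentially the standard proof of this step-size tuning lemma; the paper itself simply defers to Lemma C.5 of \cite{liu2023decentralized} and Lemma 15 of \cite{koloskova2020unified}, whose proofs proceed exactly as you do (take $\gamma$ to be the minimum of the three candidate values, bound $b\gamma$ and $e\gamma^2$ directly, and use $\frac{1}{\gamma}\le u+\sqrt{bT/r_0}+(eT/r_0)^{1/3}$ for the first term). The only caveat is the degenerate case $b=e=0$ with $u, r_0>0$, where $\Psi_T$ is strictly decreasing in $\gamma$ and the strict constraint $\gamma<\frac{1}{u}$ makes the stated bound unattainable --- but that is an artifact of the lemma's statement (the cited versions impose $\gamma\le\frac{1}{u}$), not a flaw in your argument.
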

\begin{proof}
    See Lemma C.5 in \cite{liu2023decentralized} or Lemma 15 in \cite{koloskova2020unified}.
\end{proof}
\begin{lemma}(Non-convex descent lemma) Under Assumption \ref{asmp1}, if we choose step size $\gamma < \frac{1}{4KL}$, then we have:
{\small\begin{equation}
    \begin{aligned}
    \textstyle\mathbb{E}f(\overline{x}_\mathrm{g} ^{t+1}) \leq &\textstyle \mathbb{E}f(\overline{x}_\mathrm{g} ^{t})-\frac{\gamma K}{4}\mathbb{E}\|\nabla f(\overline{x}_\mathrm{g} ^t)\|^2 \textstyle+ \frac{\gamma L^2}{n}\sum_{i,k}\mathbb{E}\|x_i^{t,k} - \overline{x}_\mathrm{g} ^t\|^2.
    \end{aligned}
\end{equation}}
\label{noncvx_lemma}
\vspace{-0.15in}
\end{lemma}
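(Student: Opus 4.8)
The plan is to derive the statement directly from the $L$-smoothness of $f$, after first pinning down the one-step dynamics of the server model $x_\mathrm{g}^t$. The crucial preliminary step is to show that, in conditional expectation, the global update is an \emph{unbiased} descent step along the averaged local gradients, with \emph{both} tracking terms cancelling out. First I would pass to the matrix notation of Appendix~\ref{Appendix:A} and exploit that $W$ is block-diagonal and doubly stochastic: mixing by $W^\top$ in \eqref{eq4} leaves every per-subnet average, and hence the network-wide average $\bar x := \tfrac1n\sum_i x_i$, unchanged, so that $\bar x^{t,k+1}=\bar x^{t,k+\frac12}$. Telescoping the local step \eqref{eq3} over $k=1,\dots,K$ then gives $\bar x^{t,K+1}-\bar x^{t,1} = -\gamma\sum_{k=1}^K\big(\tfrac1n\sum_i\nabla f_i(x_i^{t,k},\xi_i^{t,k})+\bar y^t+\bar z^t\big)$.

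Two cancellations remove the tracking terms from the server update. Because each $W_s$ is doubly stochastic and $z$ is initialized at zero, \eqref{eq6} preserves the subnet-average of $z$, so $\bar z^t=0$ for all $t$, killing the $\bar z^t$ contribution. For $y$, the key observation is that the inter-subnet correction re-added on the client side in \eqref{eq7} is exactly the $y$-correction accumulated over the $K$ local steps, so that after forming $\tilde x_{s,j}^{t+1}$ the $\bar y^t$ terms annihilate. Taking the expectation over the uniform subnet sampling in \eqref{eq8} turns the sampled average into the full subnet (hence network) average, and letting $\mathbb{E}_t$ denote expectation conditioned on the state at the start of round $t$, I obtain
\[
\mathbb{E}_t[x_\mathrm{g}^{t+1}-x_\mathrm{g}^t] = -\gamma\sum_{k=1}^K \bar g_k, \qquad \bar g_k := \tfrac1n\sum_{i=1}^n \nabla f_i(x_i^{t,k}).
\]
This gradient-tracking unbiasedness is what drives the entire bound.

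With the update characterised, the remainder is the standard descent computation. Since $f$ is $L$-smooth (being an average of $L$-smooth $f_i$), I would write
\[
\mathbb{E} f(x_\mathrm{g}^{t+1}) \le \mathbb{E} f(x_\mathrm{g}^{t}) + \mathbb{E}\langle \nabla f(x_\mathrm{g}^t),\, x_\mathrm{g}^{t+1}-x_\mathrm{g}^t\rangle + \tfrac{L}{2}\,\mathbb{E}\|x_\mathrm{g}^{t+1}-x_\mathrm{g}^t\|^2 .
\]
For the inner product I substitute the conditional mean $-\gamma\sum_k\bar g_k$ and apply the identity $\langle a,b\rangle=\tfrac12(\|a\|^2+\|b\|^2-\|a-b\|^2)$ term-by-term, bounding the residual by $L$-smoothness via $\|\nabla f(x_\mathrm{g}^t)-\bar g_k\|^2\le \tfrac{L^2}{n}\sum_i\|x_i^{t,k}-x_\mathrm{g}^t\|^2$; this yields $-\tfrac{\gamma K}{2}\mathbb{E}\|\nabla f(x_\mathrm{g}^t)\|^2$, a harmless $-\tfrac{\gamma}{2}\sum_k\|\bar g_k\|^2$, and half of the drift term. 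For the second-moment term I bound $\|\sum_k\bar g_k\|^2\le 2K^2\|\nabla f(x_\mathrm{g}^t)\|^2 + 2K\sum_k\|\bar g_k-\nabla f(x_\mathrm{g}^t)\|^2$ and again invoke $L$-smoothness on the second piece. The step-size restriction $\gamma<\tfrac1{4KL}$ is precisely what makes the resulting $L\gamma^2K^2\|\nabla f(x_\mathrm{g}^t)\|^2$ no larger than $\tfrac{\gamma K}{4}\|\nabla f(x_\mathrm{g}^t)\|^2$, tightening the gradient coefficient from $\tfrac12$ to $\tfrac14$, while the remaining drift contribution stays within $\tfrac{\gamma L^2}{n}\sum_{i,k}\|x_i^{t,k}-x_\mathrm{g}^t\|^2$; the leftover $\|\bar g_k\|^2$ terms are non-positive in aggregate and dropped.

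The hard part is the first step, not the smoothness calculus: verifying that both tracking terms vanish from the server update requires carefully matching the cross-timescale bookkeeping of \eqref{eq3}, \eqref{eq6}, \eqref{eq7} and the sampling in \eqref{eq8}, in particular that the subnet-average of $z$ remains at its zero initialization and that the $y$-correction injected during the local steps is returned verbatim before aggregation. A secondary subtlety is the fluctuation produced by the second-moment term: expanding $\mathbb{E}\|x_\mathrm{g}^{t+1}-x_\mathrm{g}^t\|^2$ into its conditional mean plus variance introduces a term controlled by Assumption~\ref{assmp3}, which I would carry as the $\sigma^2$ contribution that the companion drift lemmas and Theorem~\ref{thm1} absorb (it vanishes under deterministic gradients); everything else reduces to the single step-size inequality above.
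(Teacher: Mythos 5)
Your proposal is correct and follows essentially the same route as the paper: the $L$-smoothness descent inequality on $x_\mathrm{g}^t$, bounding the inner-product and second-moment terms via $\|\nabla f(x_\mathrm{g}^t)\|^2$ plus the client-drift sum $\tfrac{1}{n}\sum_{i,k}\|x_i^{t,k}-x_\mathrm{g}^t\|^2$, and using $\gamma<\tfrac{1}{4KL}$ to absorb the quadratic-in-$\gamma$ gradient term. The preliminary unbiasedness argument (cancellation of $\bar z^t$ and the $y$-correction, and the sampling expectation) that you spell out is left implicit in the paper's proof, and your remark that the second-moment variance contributes the $\sigma^2$ term absorbed later in Theorem~\ref{thm1} matches how the paper actually uses this lemma.
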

\begin{proof}
    By injecting L-smoothness, we can get:
{\small    \begin{equation*}
        \begin{aligned}
              \mathbb{E}f(\overline{x}_\mathrm{g} ^{t+1}) 
              \leq 
              &\textstyle \mathbb{E}f(\overline{x}_\mathrm{g} ^t) + \mathbb{E}\langle \nabla f(\overline{x}_\mathrm{g} ^t), \overline{x}_\mathrm{g} ^{t+1} - \overline{x}_\mathrm{g} ^t\rangle \textstyle+ \frac{L}{2}\mathbb{E}\|\overline{x}_\mathrm{g} ^{t+1}-\overline{x}_\mathrm{g} ^{t}\|^2\\
         \leq &\textstyle \mathbb{E}f(\overline{x}_\mathrm{g} ^{t}) + (L^2K^2\gamma^2-\frac{\gamma K}{2})\mathbb{E}\|\nabla f(\overline{x}_\mathrm{g} ^t)\|^2\\
         &\textstyle+ ( L^3K\gamma^2 + \frac{\gamma L^2}{2})\frac{1}{n}\sum_{i,k}\mathbb{E}\|x_i^{t,k} - \overline{x}_\mathrm{g} ^t\|^2\\
         \leq &\textstyle \mathbb{E}f(\overline{x}_g^{t})-\frac{\gamma K}{4}\mathbb{E}\|\nabla f(\overline{x}_g^t)\|^2 \\
        &\textstyle+ \gamma L^2\frac{1}{n}\sum_{i,k}\mathbb{E}\|x_i^{t,k} - \overline{x}_g^t\|^2.
        \end{aligned}
    \end{equation*}}
Let $\gamma < \frac{1}{KL}$ completes the proof.
\end{proof}
\begin{lemma}(Strongly-convex descent lemma)
    Under Assumptions \ref{asmp1} and \ref{asmp2}, if we choose $\gamma < \frac{1}{18KL}$, then we have:
{\small
    \begin{equation}
        \begin{aligned}
            \mathbb{E}\|\overline{x}_\mathrm{g} ^{t} - x^*\|^2 \leq& \big(1 - \frac{\mu K\gamma}{2}\big)\mathbb{E}\|\overline{x}_\mathrm{g} ^{t-1} - x^*\|^2 \\
            &+ 9\gamma KL(1 - p)\Gamma^{t-1} + 72K^3L\gamma^3 Y^{t-1} \\
            &+ 72K^3L\gamma^3Z^{t-1} + \frac{2\gamma^2K\sigma^2}{n} + 9K^2\gamma^3L\sigma^2.
        \end{aligned}
    \end{equation}}
    \label{scvx_lemma}
    \vspace{-0.15in}
\end{lemma}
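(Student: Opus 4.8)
\noindent The plan is to read the server recursion $x_\mathrm{g}^{t} = x_\mathrm{g}^{t-1} + \Tilde{x}_\mathrm{g}^{t}$ of \eqref{eq8}--\eqref{eq10} as a single inexact, stochastic, sub-sampled gradient step on $f$, and to decompose $\mathbb{E}\|x_\mathrm{g}^t - x^*\|^2$ into four pieces: an ideal-descent part, a client-drift bias, a sub-sampling variance, and a stochastic-gradient variance. First I would unfold $\Tilde{x}_{s,j}^{t}$ from \eqref{eq7} by substituting the $K$ in-subnet steps \eqref{eq3}--\eqref{eq4}. Using that the double stochasticity of $W_s$ (Assumption \ref{asmp2}) preserves the subnet average, the $+K\gamma y_{s,j}^{t-1}$ correction in \eqref{eq7} cancels exactly the accumulated $y$-terms, while the subnet average of $z$ stays pinned at its initial value $0$ (this follows from \eqref{eq6} and the double stochasticity of $W_s$). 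Hence the per-subnet average of $\Tilde{x}_{s,\cdot}^{t}$ equals $-\gamma\sum_{k=1}^K\overline{\nabla f}_s^{t-1,k}$, so that in expectation over the uniform client sampling the server step is $-\gamma$ times the population average of the stochastic local gradients evaluated at the in-subnet iterates. This establishes unbiasedness with respect to the full-batch gradient step.

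Next I would split the update as $x_\mathrm{g}^{t-1} - K\gamma\nabla f(x_\mathrm{g}^{t-1}) + \gamma e^{t-1} + (\text{sampling noise}) + (\text{gradient noise})$, where $e^{t-1}$ collects the mismatch between the aggregated local gradients and $K\nabla f(x_\mathrm{g}^{t-1})$. Taking $\mathbb{E}\|\cdot\|^2$ and conditioning on the iterates, the two zero-mean noises decouple from the deterministic part, and I bound them separately. The stochastic-gradient term yields the $\tfrac{2\gamma^2K\sigma^2}{n}$ and $9K^2\gamma^3 L\sigma^2$ contributions via Assumption \ref{assmp3} and independence across clients and rounds (the $1/n$ coming from averaging independent noises over all clients). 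The uniform without-replacement sub-sampling of $h_s$ of $m_s$ clients yields a variance proportional to the per-subnet spread of the uploaded iterates, which is exactly $\Gamma^{t-1}$, scaled by $\max_s\beta_s^2 = 1-p$; this is the source of the $9\gamma KL(1-p)\Gamma^{t-1}$ term.

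For the deterministic core $\|x_\mathrm{g}^{t-1} - K\gamma\nabla f(x_\mathrm{g}^{t-1}) - x^*\|^2$ I would invoke $\mu$-strong convexity together with $L$-smoothness (Assumptions \ref{assmp4}, \ref{asmp1}): since $\gamma < 1/(18KL)$ gives $K\gamma \le 1/L$, the standard one-step contraction produces $(1-\mu K\gamma)\|x_\mathrm{g}^{t-1}-x^*\|^2$. I would then absorb the cross term with the drift bias $\gamma e^{t-1}$ using Young's inequality, spending half of the contraction margin so that the surviving coefficient is $(1-\tfrac{\mu K\gamma}{2})$, and bound $\|e^{t-1}\|^2$ by $L^2\Delta^{t-1}$ through $L$-smoothness, since each $\nabla f_i(x_i^{t-1,k})-\nabla f_i(x_\mathrm{g}^{t-1})$ is controlled by $\|x_i^{t-1,k}-x_\mathrm{g}^{t-1}\|$. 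Finally I would apply the deviation Lemma \ref{dev_lem} to replace $\Delta^{t-1}$ by the correction terms $Y^{t-1}, Z^{t-1}$, the sampling error $\Gamma^{t-1}$, and $\sigma^2$, which produces the $72K^3L\gamma^3(Y^{t-1}+Z^{t-1})$ and residual $\Gamma^{t-1}$ and $\sigma^2$ contributions. The step-size restriction $\gamma < 1/(18KL)$ keeps every $\gamma$-power coefficient in range and, crucially, lets the gradient-norm term $\|\nabla f(x_\mathrm{g}^{t-1})\|^2 \le L^2\|x_\mathrm{g}^{t-1}-x^*\|^2$ emitted by the deviation lemma be reabsorbed into the contraction without disturbing the $(1-\tfrac{\mu K\gamma}{2})$ factor.

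I expect the main obstacle to be the bias/variance bookkeeping around the two tracking terms in the unfolding step: showing cleanly that the $y$-cancellation in \eqref{eq7} and the identically-zero subnet average of $z$ together make the server step unbiased for the population gradient, and then extracting precisely the $(1-p)$ factor from the without-replacement sub-sampling variance. Closing the contraction at $(1-\tfrac{\mu K\gamma}{2})$ rather than letting it degrade also hinges delicately on the interplay between the gradient-norm term supplied by the deviation lemma and the bound $\gamma < 1/(18KL)$.
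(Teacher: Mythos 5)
Your overall architecture matches the paper's: read the server update as an inexact averaged stochastic gradient step (justified by the $y$-cancellation in \eqref{eq7} and the vanishing subnet average of $z$), extract a strong-convexity contraction, bound the inexactness by the client drift $\Delta^{t-1}$, and then invoke Lemma \ref{dev_lem} to convert $\Delta^{t-1}$ into the $\Gamma^{t-1}$, $Y^{t-1}$, $Z^{t-1}$ and $\sigma^2$ terms; indeed $72K^3L\gamma^3$ and $9K^2\gamma^3L$ are exactly $3\gamma L$ times the corresponding coefficients in Lemma \ref{dev_lem}. One difference of attribution is minor: you source the $9\gamma KL(1-p)\Gamma^{t-1}$ term from the variance of without-replacement sub-sampling of the uploads, whereas in the paper it also arrives through Lemma \ref{dev_lem} (unsampled clients begin the round at $x_i^{t-2,K+1}$ rather than at $x_\mathrm{g}^{t-1}$, which is precisely what the $\Gamma$ term of that lemma records, with $1-p=\max_s\beta_s^2$).

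The genuine problem is in how you close the contraction. You propose (i) absorbing the drift cross-term by Young's inequality against the contraction margin and (ii) reabsorbing the gradient-norm term emitted by Lemma \ref{dev_lem} via $\|\nabla f(x_\mathrm{g}^{t-1})\|^2\le L^2\|x_\mathrm{g}^{t-1}-x^*\|^2$. Both moves spend the margin $\tfrac{\mu K\gamma}{2}\mathbb{E}\|x_\mathrm{g}^{t-1}-x^*\|^2$, so they require the incurred coefficients (roughly $\tfrac{2\gamma L^2}{\mu}\Delta^{t-1}$ from Young, and $18K^3L^3\gamma^3+4\gamma^2K^2L^2$ from the gradient-norm terms) to be dominated by $\tfrac{\mu K\gamma}{2}$. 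Under the stated hypothesis $\gamma<\tfrac{1}{18KL}$ alone this fails unless $\mu$ is a constant fraction of $L$; you would need a $\mu$-dependent step size of order $\mu/(KL^2)$, which this lemma does not assume (such a restriction is only imposed later, in \eqref{eq:14}), and the $1/\mu$ factors would also change the stated constants in front of $Y^{t-1}$ and $Z^{t-1}$. The paper instead applies the perturbed strong-convexity inequality to the inner product, which yields the $(1-\tfrac{\mu K\gamma}{2})$ contraction together with a \emph{negative} function-gap term $-2\gamma K(\mathbb{E}f(x_\mathrm{g}^{t-1})-f(x^*))$ and a drift term $3\gamma L\Delta^{t-1}$ carrying no $1/\mu$; the $\|\nabla f(x_\mathrm{g}^{t-1})\|^2$ contributions are then absorbed into that negative term via $\|\nabla f(x)\|^2\le 2L(f(x)-f(x^*))$, which only needs $\gamma KL$ small. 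You should restructure the deterministic core along these lines, or else strengthen the step-size hypothesis of the lemma.
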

\begin{proof}
    By injecting $L$-smoothness and $\mu$-strong-convexity at the same time, we can have:
    {\small\begin{equation*}
        \begin{aligned}
            \textstyle\mathbb{E}\|\overline{x}_\mathrm{g} ^{t} - x^*\|^2 =&\textstyle \mathbb{E}\|\overline{x}_\mathrm{g} ^{t-1} - \gamma \frac{1}{n} \sum_{i=1}^n\sum_{k=1}^K \nabla f_i(x_i^{t,k}, \xi^{t,k}_i)\textstyle - x^*\|^2\\
            \leq &\textstyle \big(1 - \frac{\mu K\gamma}{2}\big)\mathbb{E}\|\overline{x}_\mathrm{g} ^{t-1} - x^*\|^2 \\
            &\textstyle- 2\gamma K(\mathbb{E}f(\overline{x}_\mathrm{g} ^{t-1}) - f(x^*))\\
            &\textstyle+ 4\gamma^2K^2 \mathbb{E}\|\nabla f(\overline{x}_\mathrm{g} ^{t-1})\|^2 \\
            &\textstyle+ 3\gamma L\Delta^{t-1} + \frac{2\gamma^2K\sigma^2}{n}.\\
        \end{aligned}
    \end{equation*}}
    By plugging in Lemma \ref{dev_lem} and letting $\gamma < \frac{1}{18KL}$ completes the proof.
\end{proof}
\begin{lemma}(Deviation lemma) If we choose step size $\gamma < \frac{1}{8KL}$, then we have:
\begin{equation}
\begin{aligned}
    \textstyle\Delta^t
        \leq&     \textstyle3\rho_m^2K\frac{1}{n}\Gamma^t
        + 24K^3\gamma^2Y^r \\
        &    \textstyle+ 24K^3\gamma^2Z^r + 6K^3\gamma^2\mathbb{E}\|\nabla f(\overline{x}_\mathrm{g} ^t)\|^2 + 3K^2\gamma^2 \sigma^2.
\end{aligned}
\end{equation}
\vspace{-0.15in}
\label{dev_lem}
\end{lemma}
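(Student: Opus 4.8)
The plan is to work in the matrix notation of Appendix~\ref{Appendix:A}, writing $\Delta^t = \frac{1}{n}\sum_{k=1}^{K}\mathbb{E}\|x^{t,k} - x_\mathrm{g}^t\mathbf{1}^T\|_F^2$ and tracking the inner-loop recursion
\begin{equation*}
x^{t,k+1} = \big(x^{t,k} - \gamma(\nabla F(x^{t,k},\xi^{t,k}) + y^t + z^t)\big)W,
\end{equation*}
where the tracking matrices $y^t,z^t$ are frozen across the inner index $k$ and $W$ is the block-diagonal doubly-stochastic mixing matrix. First I would observe that the broadcast global model is invariant under mixing, $x_\mathrm{g}^t\mathbf{1}^TW = x_\mathrm{g}^t\mathbf{1}^T$, so subtracting it yields the clean deviation recursion $x^{t,k+1} - x_\mathrm{g}^t\mathbf{1}^T = \big((x^{t,k} - x_\mathrm{g}^t\mathbf{1}^T) - \gamma(\nabla F(x^{t,k},\xi^{t,k}) + y^t + z^t)\big)W$.

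The central algebraic step is to rewrite the update direction so that the two tracking terms reproduce exactly the correction quantities $Y^t$ and $Z^t$. Using the identity $J + (J_c - J) + (I - J_c) = I$ together with $\nabla F(x_\mathrm{g}^t)J = \nabla f(x_\mathrm{g}^t)\mathbf{1}^T$, I would decompose
\begin{align*}
\nabla F(x^{t,k},\xi^{t,k}) + y^t + z^t &= \big(\nabla F(x^{t,k},\xi^{t,k}) - \nabla F(x_\mathrm{g}^t)\big) + \nabla f(x_\mathrm{g}^t)\mathbf{1}^T \\
&\quad + \big(y^t + \nabla F(x_\mathrm{g}^t)(J_c - J)\big) + \big(z^t + \nabla F(x_\mathrm{g}^t)(I - J_c)\big).
\end{align*}
The first bracket splits into a mean part controlled by $L$-smoothness (Assumption~\ref{asmp1}), which contributes $L^2\|x^{t,k} - x_\mathrm{g}^t\mathbf{1}^T\|_F^2$, and a zero-mean stochastic part whose variance is bounded by $n\sigma^2$ (Assumption~\ref{assmp3}); the second term yields the $\|\nabla f(x_\mathrm{g}^t)\|^2$ contribution; and the last two brackets are precisely the summands defining $Y^t$ and $Z^t$. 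This is exactly why all four terms appear on the right-hand side, and it is what allows the bound to avoid any gradient-diversity assumption.

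Next I would unroll the deviation recursion from $k=1$, expressing $x^{t,k} - x_\mathrm{g}^t\mathbf{1}^T$ as the propagated initial deviation $(x^{t,1} - x_\mathrm{g}^t\mathbf{1}^T)W^{k-1}$ plus a telescoping sum of the decomposed update directions, each right-multiplied by a power of $W$. The initial deviation vanishes for the sampled clients (which are reset to $x_\mathrm{g}^t$) and equals $x_i^{t-1,K+1} - x_\mathrm{g}^t$ for the unsampled ones, so after invoking the mixing contraction of Assumption~\ref{asmp2} it produces the $\rho_m^2\,\Gamma^t$ term. Squaring the telescoping sum and applying Cauchy--Schwarz over the at most $K$ inner summands generates the powers of $K$ (the $K^2$ factor on the variance term and $K^3$ on the $Y^t$, $Z^t$, and $\|\nabla f(x_\mathrm{g}^t)\|^2$ terms), while double-stochasticity of $W$ keeps each mixing factor bounded by one.

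The main obstacle is that the smoothness bound on the first bracket reintroduces $\|x^{t,l} - x_\mathrm{g}^t\mathbf{1}^T\|_F^2$, i.e.\ a copy of $\Delta^t$, on the right-hand side, making the inequality self-referential. I would gather this recursive contribution, whose coefficient is of order $K^2\gamma^2L^2$, and absorb it into the left-hand side; the stated condition $\gamma < \frac{1}{8KL}$ is exactly what drives this coefficient below one so that the rearrangement goes through, after which dividing out the resulting factor produces the claimed constants. Keeping careful account of the $W$-power bookkeeping and of the precise numerical constants through this absorption step is the most delicate part of the argument.
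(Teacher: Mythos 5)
Your proposal is correct in substance and rests on the same key algebraic step as the paper: decomposing the update direction via $I = J + (J_c - J) + (I - J_c)$ so that the two tracking variables surface exactly as the correction quantities $Y^t$ and $Z^t$, with the remaining pieces handled by $L$-smoothness (Assumption~\ref{asmp1}), the variance bound (Assumption~\ref{assmp3}), and the term $\|\nabla f(x_\mathrm{g}^t)\|^2$. Where you diverge is in how the inner-loop recursion is resolved. The paper keeps a one-step recursion with growth factor $\alpha = 1 + \tfrac{1}{K-1} + 4K\gamma^2L^2$ (the $\tfrac{1}{K-1}$ coming from a Young-type split), and the condition $\gamma < \tfrac{1}{8KL}$ is used to guarantee $\alpha^{K} \leq 3$ and $\sum_{k'=0}^{k-2}\alpha^{k'}\leq 3K$, which directly yields the constants $3$, $24$, $6$, $3$ in the statement. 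You instead unroll fully to $k=1$, apply Cauchy--Schwarz over the at most $K$ summands, and then absorb the resulting self-referential $\mathcal{O}(K^2\gamma^2L^2)\,\Delta^t$ contribution into the left-hand side, with the same step-size condition driving that coefficient below one. Both mechanisms are standard and valid; the paper's route reproduces the stated numerical constants with no division step, while yours yields the same structural bound up to slightly different constants after dividing by $1 - \mathcal{O}(K^2\gamma^2L^2)$, which is harmless for the downstream use in Theorem~\ref{thm1}. One imprecision worth noting: you attribute the $\rho_m^2$ factor on the $\Gamma^t$ term to the mixing contraction of Assumption~\ref{asmp2}, but that assumption contracts deviations toward the \emph{subnet averages}, not toward $x_\mathrm{g}^t$, so extracting that factor requires an additional split of the initial deviation; since $\rho_m^2 \leq 1$, simply bounding $\|(x^{t,1}-x_\mathrm{g}^t\mathbf{1}^T)W^{k-1}\|_F \leq \|x^{t,1}-x_\mathrm{g}^t\mathbf{1}^T\|_F$ by double stochasticity gives a marginally weaker but still sufficient bound.
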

\begin{proof}
For any given D2D communication round $k$, simply plug in the iteration $x_i^{t, k} = \sum_{j \in \mathcal{N}^{\mathrm{in}}_i}w_{ij}(x_j^{t, k-1} - \gamma (\nabla f_j(x^{t,k-1}, \xi_j^{t, k-1})+ y_j^t + z_j^t))$:
\begin{equation}
    \begin{aligned}
        &\textstyle\frac{1}{n}\sum_{i=1}^n\mathbb{E}\|x_i^{t, k} - \overline{x}_\mathrm{g} ^{t}\|^2\\
        =&\textstyle \frac{1}{n}\sum_{i=1}^n\mathbb{E}\bigg\Vert\sum_{j \in \mathcal{N}^{\mathrm{in}}_i}w_{ij}\bigg(x_j^{t, k-1} - \gamma (\nabla f_j(x^{t,k-1}, \xi_j^{t, k-1})\\
        &\textstyle+ y_j^t + z_j^t)\bigg) - \overline{x}_\mathrm{g} ^{t}\bigg\Vert^2\\
        \leq&\textstyle (1 + \frac{1}{K-1} + 4K\gamma^2L^2)\frac{1}{n}\sum_{i=1}^n\mathbb{E}\|x_i^{t,k-1} - \overline{x}_\mathrm{g} ^t\|^2\\
        &\textstyle+ 8K\gamma^2Y^r+ 8K\gamma^2Z^r + 2K\gamma^2\mathbb{E}\|\nabla f(\overline{x}_\mathrm{g} ^t)\|^2 + \gamma^2 \sigma^2.
    \end{aligned}
    \label{eq41}
\end{equation}
    If we let $(1 + \frac{1}{K-1} + 4K\gamma^2L^2) = \alpha$ and $\gamma < \frac{1}{8KL}$, we can have $\alpha^{k-1} \leq \alpha^K \leq 3$ and $\sum_{k'=0}^{k-2}\alpha^{k'} \leq 3K$. Plugging them into the equation \eqref{eq41} completes the proof.
        
\end{proof}
\begin{lemma}(Inter-subnet GT lemma)
With constant step size $\gamma < \frac{1}{\sqrt{6}KL}$, we have: 
{\small
\begin{equation}
\begin{aligned}
        Y^t \leq \textstyle (1 - \frac{p}{2})Y^{t-1} \! + \!\frac{10L^2}{p}\Delta^{t-1} \!+\! \frac{12}{p}\gamma^2L^2K^2\mathbb{E}\|\nabla f(\overline{x}_\mathrm{g} ^{t-1})\|^2 \!+\! \frac{2\sigma^2}{pK}.
\end{aligned}
\end{equation}}
\vspace{-0.15in}
\label{lem5}
\end{lemma}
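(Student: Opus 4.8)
The plan is to derive a one-step recursion for the inter-subnet correction term $Y^t = \frac{1}{n}\mathbb{E}\|y^t + \nabla F(x_\mathrm{g}^t)(J_c - J)\|_F^2$ and to show it contracts at rate $1 - p/2$. First I would write the server-side update of the stacked variable $y^t$ in matrix form. Using the broadcast rule $y_{s,j}^{t+1} = \psi_s^{t+1}$ for the sampled clients, the retention rule $y_i^{t+1} = y_i^t$ for the unsampled clients, and the definition of $\psi_s^{t+1}$ in \eqref{eq10}, the update decomposes into (i) a sampling-and-averaging operator acting on the previous iterate and (ii) an injection of freshly accumulated gradient information scaled by $\frac{1}{K\gamma}$. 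I would encode the sampling structure through the block-diagonal matrices $B = \mathrm{diag}(\beta_s I)$ and $B' = \mathrm{diag}((1-\beta_s)I)$, so that the operator collapses each subnet's sampled clients onto a common value $\psi_s^{t+1}$ while leaving the unsampled fraction $\beta_s$ untouched. Here the additive term $T\gamma y_{s,j}^t$ built into $\tilde x_{s,j}^{t+1}$ in \eqref{eq7} is exactly the cancellation that removes the accumulated inter-subnet correction, leaving a clean gradient signal in the injected quantity.

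Next I would introduce the tracking target by adding $\nabla F(x_\mathrm{g}^{t+1})(J_c - J)$ to both sides. The key algebraic observation is that the injected quantity, projected onto the subnet-minus-global subspace $J_c - J$, approximately equals $-\nabla F(x_\mathrm{g}^t)(J_c - J)$ up to errors controlled by $L$-smoothness (Assumption \ref{asmp1}): the discrepancy between gradients evaluated at the iterates $x_i^{t,k}$ and at $x_\mathrm{g}^t$ is precisely the client-drift quantity $\Delta^t$, while the change of the target across one outer round contributes the gradient-norm term $\mathbb{E}\|\nabla f(x_\mathrm{g}^t)\|^2$. This is where Lemma \ref{dev_lem} is implicitly useful, since it already bounds the accumulated local trajectories in terms of $\Delta^t$, $Y^t$, $Z^t$, and the gradient norm.

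Then I would take expectation over the random client sampling. For subset sampling of $h_s$ out of $m_s$ clients, a variance-decomposition computation shows that the averaging operator contracts the subnet-wise tracking deviation by exactly the sample-wise mixing rate $\phi_s = 1 - \beta_s^2$ (the square arising because the probability that a fixed pair of clients is jointly unsampled scales like $\beta_s^2$); taking $p = \min_s \phi_s$ yields a uniform raw contraction $1 - p$. A single Young step $\|a + b\|^2 \leq (1 + \frac{p}{2(1-p)})\|a\|^2 + (1 + \frac{2(1-p)}{p})\|b\|^2$ then converts $1 - p$ into the stated $1 - p/2$, since $(1-p)(1 + \frac{p}{2(1-p)}) = 1 - \frac{p}{2}$, while inflating the error terms by the conjugate factor $\frac{2-p}{p} = \mathcal{O}(1/p)$, which is the source of the $\frac{1}{p}$ prefactors in the bound. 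The residual cross terms vanish because the sampling is unbiased, so the gradient-injection term is conditionally uncorrelated with the contracted iterate.

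Finally I would collect the error contributions and absorb the higher-order pieces using the step-size restriction $\gamma < \frac{1}{\sqrt{6}KL}$. The stochastic gradient noise enters through the $\frac{1}{K\gamma}$ scaling: averaging $K$ independent minibatch gradients contributes variance $\sigma^2/K$ under Assumption \ref{assmp3}, giving the $\frac{2\sigma^2}{pK}$ term, while the drift and gradient-norm pieces are scaled by $L^2$ through smoothness and by $\gamma^2K^2$ through the inner-loop length, yielding $\frac{10L^2}{p}\Delta^{t-1}$ and $\frac{12}{p}\gamma^2L^2K^2\,\mathbb{E}\|\nabla f(x_\mathrm{g}^{t-1})\|^2$ once the Young and smoothness constants are combined. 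I expect the main obstacle to be the third step: correctly computing the sampling expectation to recover the exact factor $\phi_s = 1 - \beta_s^2$, because, unlike standard gradient tracking where contraction comes from a deterministic mixing matrix $W$ with rate $\rho$, here the contraction is generated entirely by the stochastic subnet sampling, and the coupling between the per-subnet sampling and the shared broadcast value $\psi_s^{t+1}$ must be handled carefully to preserve the clean $1 - \beta_s^2$ dependence rather than a looser $1 - \beta_s$ rate.
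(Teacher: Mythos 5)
Your proposal matches the paper's own proof in all essentials: the same matrix-form decomposition of $y^t + \nabla F(x_\mathrm{g}^t)(J_c-J)$ into a $B$-contracted previous deviation, a $B'$-weighted injected gradient discrepancy, and a target-drift term plus $\sigma^2/K$ noise, followed by the same Young-inequality step turning the raw $1-p$ contraction into $1-\frac{p}{2}$ with $\mathcal{O}(1/p)$ inflation, $L$-smoothness to produce the $\Delta^{t-1}$ and gradient-norm terms, and the step-size restriction $\gamma < \frac{1}{\sqrt{6}KL}$ to absorb the higher-order pieces. The sampling-contraction subtlety you flag (recovering $\phi_s = 1-\beta_s^2$ rather than $1-\beta_s$) is handled in the paper exactly as you describe, via the squared Frobenius norm of the block-diagonal matrix $B = \mathrm{diag}(\beta_s I)$.
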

\begin{proof}
We have
{\small\begin{equation*}
    \begin{aligned}
        nY^t =&\textstyle \mathbb{E}\|{y^t} + \nabla F(\overline{x}_\mathrm{g} ^t)(J_c - J)\|_F^2\\
        =&\textstyle\mathbb{E}\| ({y^{t-1}} + \nabla F(\overline{x}_\mathrm{g} ^{t-1})(J_c - J)) B \\
        &\textstyle+ (\frac{1}{K}\sum_{k=1}^K\nabla F(x^{t-1,k}) - \nabla F(\overline{x}_\mathrm{g} ^{t-1}))(J - J_c) B' \\
        &\textstyle+ (\nabla F(\overline{x}_\mathrm{g} ^t) - \nabla F(\overline{x}_\mathrm{g} ^{t-1}))(J_c - J)\|_F^2 + n\frac{\sigma^2}{K}\\
        \leq&\textstyle (1 - \frac{p}{2})nY^{t-1} + \frac{6}{p}(\frac{(1 - \rho_m)^2nL^2}{K}\Delta^{t-1} \\
        &\textstyle+ 2\gamma^2nL^4K\Delta^{t-1} + 2\gamma^2L^2K^2n\mathbb{E}\|\nabla f(\overline{x}_\mathrm{g} ^{t-1})\|^2 )\\
        &\textstyle+\frac{6\gamma^2L^2Kn}{p}\sigma^2 +  n\frac{\sigma^2}{K}
    \end{aligned}
\end{equation*}}
Letting $\gamma < \frac{1}{\sqrt{6}KL}$ completes the proof.
\end{proof}
\begin{lemma}
    (In-subnet GT lemma) With constant step size $\gamma < \frac{1}{\sqrt{6}KL}$, we have: 

{\small    \begin{equation}
        \begin{aligned}
        Z^t &\leq\textstyle (1-\frac{q}{2})Z^{t-1} + \frac{26L^2}{q}\Delta^{t-1}
        \textstyle+ \frac{12K^2L^2\gamma^2}{q}\mathbb{E}\|\nabla f(\overline{x}_\mathrm{g} ^t)\|^2 + \frac{2\sigma^2}{qK}\\
        \end{aligned}
    \end{equation}}
    \label{lem6}
    \vspace{-0.15in}
\end{lemma}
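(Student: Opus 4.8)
The plan is to mirror the structure of the inter-subnet GT lemma (Lemma \ref{lem5}), since the in-subnet tracking term $z_i^t$ is updated in an analogous fashion via \eqref{eq5}--\eqref{eq6}, but now the relevant consensus contraction is governed by the in-subnet mixing rate $q$ (via Assumption \ref{asmp2}) rather than the sampling-based rate $p$. First I would write $nZ^t = \mathbb{E}\|z^t + \nabla F(x_\mathrm{g}^t)(I - J_c)\|_F^2$ and substitute the recursion for $z^{t+1}$ from \eqref{eq6}, rewritten in matrix form as $z^t = z^{t-1} + \frac{1}{K\gamma}\sum_{k=1}^K \tilde{z}^{t-1,k}(I - W)$, together with the definition $\tilde{z}_i^{t,k} = x_i^{t,k+\frac12} - x_i^{t,k} + \gamma y_i^t = -\gamma(\nabla f_i(x_i^{t,k},\xi_i^{t,k}) + z_i^t)$. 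The key algebraic goal is to express $z^t + \nabla F(x_\mathrm{g}^t)(I - J_c)$ as a contraction of the previous iterate $z^{t-1} + \nabla F(x_\mathrm{g}^{t-1})(I - J_c)$ multiplied by $W$ (on the block-diagonal subnet structure), plus perturbation terms measuring how far local gradients deviate from the gradient at the consensus point $x_\mathrm{g}^t$.

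Once the term is in the form (contracting part)$\times W$ plus (gradient-deviation perturbation), I would apply Assumption \ref{asmp2} to the contracting part: since $W$ is block-diagonal doubly stochastic and $z^t + \nabla F(x_\mathrm{g}^t)(I-J_c)$ is orthogonal to the per-subnet averaging directions encoded by $J_c$, the mixing inequality $\|X(W - J_c)\|_F^2 \le (1-q)\|X(I - J_c)\|_F^2$ yields a $(1-q)$-contraction. I would then split off the cross terms using Young's inequality with a parameter tuned so the main contraction reduces to $(1 - q/2)$, at the cost of a $\frac{1}{q}$ factor multiplying the perturbation terms — exactly paralleling the $(1-p/2)$ / $\frac{1}{p}$ trade-off in Lemma \ref{lem5}. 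The perturbation terms split into three pieces: a client-drift piece bounded by $\Delta^{t-1}$ (using $L$-smoothness to control $\|\nabla F(x^{t-1,k}) - \nabla F(x_\mathrm{g}^{t-1})\|_F^2$), a gradient-at-consensus piece giving the $\gamma^2 K^2 L^2\,\mathbb{E}\|\nabla f(x_\mathrm{g}^t)\|^2$ contribution, and a stochastic-variance piece yielding the $\frac{\sigma^2}{qK}$ term via Assumption \ref{assmp3}.

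The step I expect to be the main obstacle is correctly tracking the constants and the smoothness bound on the drift term: the larger numerator $26L^2/q$ here (versus $10L^2/p$ in Lemma \ref{lem5}) reflects that the in-subnet term $z$ couples to both the averaged-gradient-over-$K$-rounds deviation and the consensus deviation, so more of the perturbation mass lands on $\Delta^{t-1}$. I would be careful that when substituting $\nabla f_i(x_i^{t,k})$ in place of the true gradient at consensus, the triangle/Young decomposition distributes the $\Delta$ contribution correctly across all $K$ inner rounds, and that the choice $\gamma < \frac{1}{\sqrt{6}KL}$ suffices to absorb the $\gamma^2 L^2 K$ factors arising from $L$-smoothness into the clean constants claimed. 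The final step would be to collect everything, apply the step-size restriction to simplify the coefficients, and divide through by $n$ to recover the stated bound.
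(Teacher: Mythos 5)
Your proposal matches the paper's own proof essentially step for step: the paper likewise expands $nZ^t$ by substituting the matrix-form recursion for $z^t$, decomposes it into the contracting part $\bigl(z^{t-1}+\nabla F(x_\mathrm{g}^{t-1})(I-J_c)\bigr)W$ plus the averaged-gradient-deviation, consensus-gradient, and stochastic-noise perturbations, applies Assumption \ref{asmp2} with Young's inequality to trade $(1-q)$ for $(1-\tfrac{q}{2})$ at the cost of $\tfrac{1}{q}$ on the perturbations, and absorbs the $\gamma^2K^2L^2$ factors using $\gamma<\tfrac{1}{\sqrt{6}KL}$. This is the same approach; no gaps.
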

\begin{proof}
We have:
{\small    \begin{equation*}
        \begin{aligned}
        nZ^t =&\textstyle \mathbb{E}\|{z^t} + \nabla F(\overline{x}_\mathrm{g} ^t)(I - J_c)\|^2_F\\
            =&\textstyle \mathbb{E}\|{z^{t-1}}W + \nabla F(\overline{x}_\mathrm{g} ^{t-1})(W - J_c) \\
            &\textstyle+ \frac{1}{K}\sum_{k=1}^K(\nabla F(\mathbf{x}^{t-1,k}) - \nabla F(\overline{x}_\mathrm{g} ^{t-1}))(W - I) \\
            &\textstyle+ (\nabla F(\overline{x}_\mathrm{g} ^t) - \nabla F(\overline{x}_\mathrm{g} ^{t-1}))(I - J_c)\|_F^2+ \frac{n\sigma^2}{K}\\
            \leq&\textstyle(1-\frac{p}{2})nZ^{t-1} + \frac{6}{p}(\frac{4nL^2}{K}\Delta^{t-1} \\
            &\textstyle+ \mathbb{E}\|\nabla F(x_g^t) - \nabla F(x_g^{t-1})\|_F^2) + \frac{\sigma^2}{K}\\
            \leq&\textstyle(1-\frac{q}{2})nZ^{t-1} + \frac{6n}{q}(\frac{4L^2}{K}\Delta^{t-1} + 2K\gamma^2L^4\Delta^{t-1} \\
            &\textstyle+ 2K^2L^2\gamma^2\mathbb{E}\|\nabla f(\overline{x}_\mathrm{g} ^t)\|^2 + L^2K\gamma^2n\sigma^2)+ \frac{n\sigma^2}{K}
        \end{aligned}
    \end{equation*}}
Letting $\gamma < \frac{1}{\sqrt{6}KL}$ completes the proof.
\end{proof}
\begin{lemma} (Sample Gap lemma)
{\small\begin{equation}
    \begin{aligned}
        \textstyle\Gamma^t
        \leq&\textstyle (1 - \frac{p}{2})\Gamma^{t-1} + \frac{12}{p}\gamma^2KL^2\Delta^{t-1} + \frac{12}{p}\gamma^2K^2Y^{t-1} \\
        &\textstyle+ \frac{12}{p}\gamma^2K^2Z^{t-1} + \frac{12}{p}\gamma^2K^2\mathbb{E}\|\nabla f(\overline{x}_\mathrm{g} ^{t-1})\|+ K\gamma^2\sigma^2
    \end{aligned}
\end{equation}}
\label{lem7}
\vspace{-0.15in}
\end{lemma}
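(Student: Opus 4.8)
The plan is to expand the definition $\Gamma^t = \frac{1}{n}\sum_i \mathbb{E}\|x_i^{t-1,K+1} - x_\mathrm{g}^t\|^2$, substitute the aggregation rule \eqref{eq7}--\eqref{eq10} for $x_\mathrm{g}^t$, and treat the two sources of randomness (subnet sampling and stochastic gradients) separately via the tower property. First I would decompose each term $x_i^{t-1,K+1}-x_\mathrm{g}^t$ into (i) the deviation that client $i$ carries into round $t-1$ through its start-of-round model $x_i^{t-1,1}$ --- which is zero for clients sampled in the previous global step and otherwise equals the summand of $\Gamma^{t-1}$ --- and (ii) the fresh displacement generated by the $K$ D2D updates \eqref{eq3}--\eqref{eq4} of round $t-1$ together with the sampled server correction.

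The central step is to take the conditional expectation over the uniform without-replacement sampling. Because the aggregate is conditionally unbiased and the $T\gamma y$ terms are inserted precisely to cancel the inter-subnet bias (cf. \eqref{eq7}), the carried-over squared deviation is damped by the sample-wise factor $1-\phi_s = \beta_s^2$, and since $\phi_s \geq p$ this supplies the contraction. A Young-type split $\|a+b\|^2 \leq (1+\eta)\|a\|^2 + (1+\eta^{-1})\|b\|^2$ with $\eta = \Theta(p)$ then converts the damped term into $(1-\tfrac{p}{2})\Gamma^{t-1}$ while inflating the fresh-displacement residual by $O(1/p)$, which is the origin of the $\tfrac{12}{p}$ coefficients.

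It then remains to bound the fresh displacement. Unrolling the ATC recursion over the $K$ local steps writes it as $-\gamma\sum_{k=1}^K(\nabla f_i(x_i^{t-1,k},\xi_i^{t-1,k}) + y_i^{t-1} + z_i^{t-1})$ up to the doubly-stochastic mixing, so a Jensen/Cauchy--Schwarz bound over the $K$ summands yields the $K^2\gamma^2$ prefactors. I would split each local gradient as $\nabla f(x_\mathrm{g}^{t-1})$ plus a deviation: $L$-smoothness turns the deviation into the client-drift term $\Delta^{t-1}$ (carrying $L^2$, with an extra $K$ from the summation, giving $\gamma^2KL^2\Delta^{t-1}$); the shifted tracking variables $y_i^{t-1}, z_i^{t-1}$ are matched to the definitions of $Y^{t-1}$ and $Z^{t-1}$ by adding and subtracting the gradient corrections $\nabla F(x_\mathrm{g}^{t-1})(J_c-J)$ and $\nabla F(x_\mathrm{g}^{t-1})(I-J_c)$; and the residual $\nabla f(x_\mathrm{g}^{t-1})$ term keeps its $K^2\gamma^2$ factor. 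The stochastic-gradient noise is independent across the $K$ steps, so its variances sum to $K\gamma^2\sigma^2$.

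The hard part will be the sampling step: carefully evaluating the conditional variance of the without-replacement aggregate and verifying that the carried-over deviation collapses to exactly $\Gamma^{t-1}$ with the quadratic damping $\beta_s^2$, so that the recursion closes with the factor $1-\tfrac{p}{2}$ and the residual weight is $O(1/p)$ rather than $O(1/p^2)$. The second delicate point is the gradient-tracking bookkeeping --- ensuring that after the $T\gamma y$ cancellation and the add-subtract of gradient corrections, the leftover terms regroup cleanly into $Y^{t-1}$, $Z^{t-1}$, $\Delta^{t-1}$, and $\|\nabla f(x_\mathrm{g}^{t-1})\|^2$ without extra cross terms that would spoil the stated constants.
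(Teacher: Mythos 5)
Your proposal follows essentially the same route as the paper's (very terse) proof: expand $\Gamma^t$, split $x_i^{t-1,K+1}-x_\mathrm{g}^t$ into the carried-over sampling deviation and the accumulated $K$-step displacement (with the $y$-cancellation making the server aggregate reduce to the network-average gradient sum), apply Young's inequality with $\eta=\Theta(p)$ to obtain the $(1-\tfrac{p}{2})\Gamma^{t-1}$ contraction with $\tfrac{12}{p}$ inflation, and regroup the displacement into $\Delta^{t-1}$, $Y^{t-1}$, $Z^{t-1}$, $\|\nabla f(x_\mathrm{g}^{t-1})\|^2$, and $K\gamma^2\sigma^2$. You have also correctly isolated the only genuinely delicate step --- whether the sampling damps the carried-over squared deviation by $\beta_s^2=1-\phi_s$ (giving $O(1/p)$ residuals) or only by $\beta_s$ --- which is precisely the step the paper's own one-display proof leaves unjustified.
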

\begin{proof}
We have:
    {\small\begin{equation*}
    \begin{aligned}
        \textstyle\Gamma^t =&\textstyle \frac{1}{n}\sum_{i=1}^n\mathbb{E}\|x_i^{t-1,K+1} - \overline{x}_\mathrm{g} ^t\|^2\\
        =&\textstyle \frac{1}{n}\sum_{i=1}^n\mathbb{E}\|x_i^{t-1,K+1} - \overline{x}_\mathrm{g} ^{t-1} \\
        &\textstyle+ \gamma\sum_{k=1}^K\frac{1}{n}\sum_{i=1}^n \nabla f_i(x_i^{t,k},\xi_i^{t,k})\|^2\\
        \leq&\textstyle (1 - \frac{p}{2})\Gamma^{t-1} + \frac{12}{p}\gamma^2KL^2\Delta^{t-1} + \frac{12}{p}\gamma^2K^2Y^{t-1}\\
        &\textstyle+ \frac{12}{p}\gamma^2K^2Z^{t-1} + \frac{12}{p}\gamma^2K^2\mathbb{E}\|\nabla f(\overline{x}_\mathrm{g} ^{t-1})\|^2 + K\gamma^2\sigma^2
    \end{aligned}
\end{equation*}}
\end{proof}

\clearpage  

\bibliographystyle{IEEEtran}
\bibliography{egbib}

\end{document}